\newcommand{\newvert}{\mathsf{new}}
\newcommand{\Eleft}[1]{#1_{\mathsf{left}}}
\newcommand{\Eleftedge}{\Eleft{E}}
\newcommand{\Eright}[1]{#1_{\mathsf{right}}}
\newcommand{\Erightedge}{\Eright{E}}
\newcommand{\Ecross}[1]{#1_{\mathsf{cross}}}
\newcommand{\Ecrossedge}{\Ecross{E}}
\newcommand{\Ileft}{\mathfrak{I}_{\mathsf{left}}}
\newcommand{\apexgraph}[1]{#1_{k\operatorname{-{\mathsf{div}}}}^{\mathsf{\;*}}}
\newcommand{\apexgraphthree}[1]{#1_{3\operatorname{-{\mathsf{div}}}}^{\mathsf{\;*}}}
\newcommand{\StabGIG}{\textup{\textsc{StabGIG}}}
\newcommand{\curve}[1]{\mathtt{c}(#1)}
\newcommand{\lefty}{\mathbf{j}}
\newcommand{\righty}{\mathbf{i}}
\newcommand{\level}[1]{\mathsf{d}_{#1}}
\newcommand{\ie}{\textit{i.e.}}
\newcommand{\etal}{\textit{et al. \xspace}}
\newcommand{\subdivide}[1]{#1_{k\operatorname{-{\mathsf{div}}}}}
\newcommand{\LS}{\RoundedLsteel}
\newcommand{\U}{\sqcup}
\newcommand{\Urev}{\sqcap}
\newcommand{\HS}{\rotatebox[origin=c]{180}{\LS}}
\newcommand{\RS}{\reflectbox{\LS}}
\definecolor{dartmouthgreen}{rgb}{0.05, 0.5, 0.06}
\newtheorem{theorem}{Theorem}
\newtheorem{proposition}[theorem]{Proposition}
\newtheorem{observation}[theorem]{Observation}
\newtheorem{corollary}[theorem]{Corollary}
\newtheorem{question}{Question}
\newtheorem{remark}{Remark}
\newcommand{\Pb}[4]{%
\begin{center}
  \begin{tabular}{|l|}%
  \hline
    \begin{minipage}[c]{0.95\textwidth}
      \smallskip%
      \par\noindent%
      #1%
      \par\noindent%
      \textbf{\textsf{Input}}: #2%
      \par\noindent%
      \textbf{\textsf{#3}}: #4 
      \smallskip%
      \par\noindent%
    \end{minipage}
  \\\hline
  \end{tabular}%
\end{center}
}%
  \title{Recognizing Geometric Intersection Graphs Stabbed by a Line}
\author{Dibyayan Chakraborty\footnote{Univ Lyon, CNRS, ENS de Lyon, Université Claude Bernard Lyon 1, LIP UMR5668, France}
\and Kshitij Gajjar\footnote{Indian Institute of Technology Jodhpur, NH 62, Karwar, Jodhpur, Rajasthan, India -- 342037} \and Irena Rusu\footnote{Nantes Université, École Centrale Nantes,
CNRS, LS2N, UMR 6004, F-44000 Nantes, France}}
\begin{document}

\maketitle

\setstretch{1.1}
\begin{abstract}
In this paper, we determine the computational complexity of recognizing two graph classes, \emph{grounded \LS}-graphs and \emph{stabbable grid intersection} graphs. An \LS-shape is made by joining the bottom end-point of a vertical ($\vert$) segment to the left end-point of a horizontal ($-$) segment. The top end-point of the vertical segment is known as the {\em anchor} of the \LS-shape. Grounded \LS-graphs are the intersection graphs of \LS-shapes such that all the \LS-shapes' anchors lie on the same horizontal line. We show that recognizing grounded \LS-graphs 
is $\NP$-complete. This answers an open question asked by Jel{\'\i}nek \& T{\"o}pfer (Electron. J. Comb., 2019).



Grid intersection graphs are the intersection graphs of axis-parallel line segments in which two vertical (similarly, two horizontal) segments cannot intersect. We say that a (not necessarily axis-parallel) straight line $\ell$ stabs a segment $s$, if $s$ intersects $\ell$. A graph $G$ is a stabbable grid intersection graph ($\StabGIG$) if there is a grid intersection representation of $G$ in which the same line stabs all its segments. We show that recognizing $\StabGIG$ graphs is $\NP$-complete, even on a restricted class of graphs.  This answers an open question asked by Chaplick \etal (\textsc{O}rder, 2018).

\end{abstract}

\section{Introduction}

Recognizing a graph class means deciding whether a given graph is a member of the graph class. In this paper, we deal with the computational complexity of recognizing intersection graphs of certain types of geometric objects in the plane. These recognition problems stemmed from various fields of active research, like VLSI design~\cite{chung1983diogenes,chung1987embedding,sherwani2007}, map labelling~\cite{agarwal1998label}, wireless networks~\cite{kuhn2008ad}, computational biology~\cite{xu2006fast}, and have now become an indelible part of computational geometry. 

Perhaps the most extensively studied class of geometric intersection graphs is the class of \emph{interval graphs} (intersection graphs of intervals on the real line),  introduced by Benzer~\cite{benzer1959topology} while studying the fine structure of genes. In 1962, Lekkerkerker \& Boland~\cite{lekkerkerker} proved that the class of interval graphs is precisely the class of graphs without \emph{holes}\footnote{A hole is an induced cycle with $4$ or more vertices.} and \emph{asteroidal triples}\footnote{Three vertices of a graph form an asteroidal triple if the removal of any one of the vertices (along with all its neighbouring vertices) from the graph does not disconnect the other two.}. The above elegant result of Lekkerkerker \& Boland~\cite{lekkerkerker} motivated researchers to study further generalizations of interval graphs and their characterizations. One such generalization was considered 
by Gy{\'a}rf{\'a}s and Lehel~\cite{gyarfas1985covering} who considered the class of \emph{\LS-intersection graphs}. An \emph{\LS-shape} \cite{gyarfas1985covering} is a couple made of a vertical and horizontal segment, whose bottom and left end-points coincide. \emph{\LS-intersection graphs} are the intersection graphs of \LS-shapes. Observe that interval graphs can be expressed as intersection graphs of \LS-shapes as follows: given an interval representation of a graph, replace each interval with an \LS-shape whose horizontal segment is the same as the interval on the horizontal line $y=0$. The popularity of \LS-graphs among graph theorists increased when Gon{\c{c}}alves et al.~\cite{gonccalves2018planar} proved that all planar graphs (graphs that can be drawn in the plane in such a way that edges can intersect only at their end-points) are \LS-graphs. In contrast, recognizing \LS-graphs is $\NP$-complete~\cite{chakraborty2022}.


An interesting subclass of \LS-graphs called \emph{infinite-\LS-graphs} was considered by McGuinness~\cite{mcguinness1996bounding}. The top end-point of the vertical segment of an \LS-shape is called the {\em anchor} of the \LS-shape. \emph{Infinite-\LS-graphs}, also known as \emph{grounded \LS-graphs} are the intersection graphs of \LS-shapes whose anchors belong to the same horizontal line, called the \emph{ground line}. See Figure~\ref{fig:example-stab} for an example. Interval graphs are (also) grounded \LS-graphs. (Indeed, given a set of intervals, replace each interval with an \LS-shape whose horizontal segment is the same as the interval on the horizontal line $y=0$ and whose anchor lies on the horizontal line $y=1$). Other well-studied subclasses of grounded \LS-graphs are (see~\cite{asinowski2012vertex}) outerplanar graphs, permutation graphs, circle graphs etc. Researchers have studied different aspects (e.g. chromatic number~\cite{mcguinness1996bounding,davies2021grounded}, dominating set~\cite{chakraborty2022dominating}, independent set~\cite{bose2022computing}, forbidden patterns~\cite{jelinek2019grounded}) of grounded \LS-graphs. However, the complexity of recognizing grounded \LS-graphs remained open. In this paper, we prove that recognizing grounded \LS-graphs is $\NP$-complete answering a question asked by Jel{\'\i}nek and T{\"o}pfer~\cite{jelinek2019grounded}.

\begin{mdframed}[style=MyFrame]
\vspace{-0.2cm}
\begin{theorem}\label{thm:grounded-L}
Recognizing grounded \LS-graphs is $\NP$-complete.
\end{theorem}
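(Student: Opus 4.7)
My plan is to split the argument into the routine membership claim and the more substantial hardness reduction. For membership in $\NP$, a grounded \LS-representation of an $n$-vertex graph can be encoded by a polynomial-size combinatorial certificate: the left-to-right order of the $n$ anchors on the ground line, together with, for each \LS-shape, its relative depth among all horizontal segments and the index of the rightmost anchor its horizontal segment reaches. Given such a certificate, one verifies in polynomial time that the intersection pattern induced by the description is exactly $G$.

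For $\NP$-hardness, I will reduce from \PHPC{}, which the macros in the preamble already anticipate as a natural source of hardness. Given an instance $G$ of this problem, I construct the graph $\apexgraph{G}$ obtained by subdividing every edge of $G$ a fixed constant number of times (the exact value of $k$, plausibly $3$ or $7$ as the auxiliary macros suggest, will fall out of the rigidity analysis) and then attaching a universal \emph{apex} vertex adjacent to every vertex of $\subdivide{G}$. The output of the reduction is $\apexgraph{G}$, and the target equivalence is that $G$ is a yes-instance of \PHPC{} if and only if $\apexgraph{G}$ is a grounded \LS-graph.

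The forward direction should be a direct geometric construction: reading off the vertices of $G$ in the order they appear on the completed Hamiltonian path, I place their \LS-shapes left-to-right along the ground line, slot the internal subdivision vertices of each edge into monotone staircase positions between the two endpoints, and finally add the apex as the leftmost \LS-shape with a very long horizontal segment at very shallow depth so that every other vertical segment crosses it. The only care needed is to make sure that non-incident edges of $G$ -- now non-adjacent subdivision paths in $\apexgraph{G}$ -- do not interfere with one another, which is exactly what the subdivisions give us the room to do.

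The hard direction is where I expect the main obstacle. I will argue that in any grounded \LS-representation of $\apexgraph{G}$, the universal apex imposes a ``baseline'' constraint that essentially fixes the depth geometry, leaving the left-to-right order of anchors as the only combinatorial degree of freedom. The central rigidity claim will be that each subdivided edge of $G$, being a long induced path all of whose internal vertices have degree exactly two in $\subdivide{G}$, must occupy a contiguous block of consecutive anchors on the ground line -- otherwise some foreign \LS-shape would be forced to cross that path in a way that either creates a chord absent in $\apexgraph{G}$ or else produces a forbidden induced subpattern in the sense of Jel\'inek and T\"opfer. Once contiguity of every subdivided edge is established, the induced order of the blocks on the ground line projects to a linear ordering of the original vertices of $G$ in which consecutive vertices are connected by an edge, and the corresponding Hamiltonian path completion can be read off; choosing $k$ large enough is what guarantees that the foreign-crossing argument actually forces the forbidden subpattern rather than merely suggesting it.
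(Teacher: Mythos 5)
Your reduction starts from \PHPC{} and builds the apex graph $\apexgraph{G}$; the paper does not do this for Theorem~\ref{thm:grounded-L} (it reserves that construction for $\StabGIG$, Theorem~\ref{thm:stabbable-GIG}) and instead reduces from {\sc StickRec}, using a gadget $\Lambda(b)$ around each vertex $b\in B$ whose essentially unique grounded \HS-representation forces every universal neighbour $a\in A$ to be an \HS$^h$-shape, so that a grounded representation of the output graph is exactly a ``nice'' representation, hence a stick representation, of $G$. The difference of route would be fine if your reduction worked, but both directions of your equivalence have genuine gaps. For the forward direction, you never establish that $\apexgraph{G}$ is a \emph{grounded} \LS-graph when $G$ is a yes-instance of \PHPC{}. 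Your sketch is already incorrect as stated: an apex shape whose horizontal segment is crossed by ``every other vertical segment'' would be adjacent to the subdivision vertices as well, whereas the apex must meet only the original vertices. More fundamentally, the known realizations of $\apexgraph{G}$ place the two pages of the two-page book embedding of the completed graph on the two sides of the apex (this is precisely how the $\StabGIG$ construction in Section~\ref{sec:stabgigrep} uses $\U$-shapes below and $\Urev$-shapes above the apex segment); a grounded representation gives you only one side of a single ground line, and squeezing both nested families of subdivided paths under it without spurious intersections is exactly the nontrivial step you skip. It is not at all clear it can be done, and nothing in the paper or in the cited literature supplies it.

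For the backward direction, the central rigidity claim --- that each subdivided edge must occupy a contiguous block of consecutive anchors, on pain of a ``forbidden induced subpattern in the sense of Jel\'{\i}nek and T\"opfer'' --- is asserted, not proved, and is not obviously true: long induced paths in grounded representations can interleave their anchors with foreign shapes without creating chords. There is a cleaner rescue you miss: two grounded \LS-shapes intersect at most once, so grounded \LS-graphs are $1$-string graphs, and Proposition~\ref{prop:yes-instance} (from~\cite{chakraborty2022}) then yields that $G$ is a yes-instance of \PHPC{} whenever $\apexgraph{G}$ is a grounded \LS-graph. But even with that repair the forward direction remains unproven, so the reduction as proposed does not establish hardness. (Your $\NP$-membership argument is acceptable in spirit and matches the paper's integer-rank discretization.)
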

\vspace{-0.2cm}
\end{mdframed}

\begin{figure}
    \centering
    \scalebox{0.9}{
    \begin{tabular}{cp{0.5cm}cp{0.5cm}c}
       \includegraphics[scale=0.6]{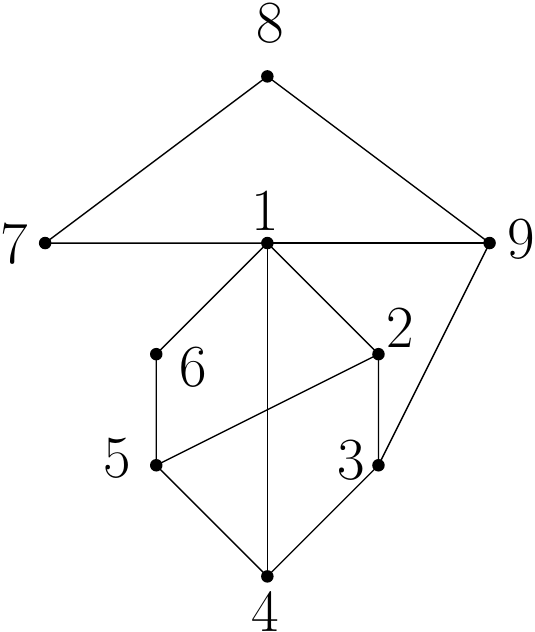}  && 
       \scalebox{0.35}{\input{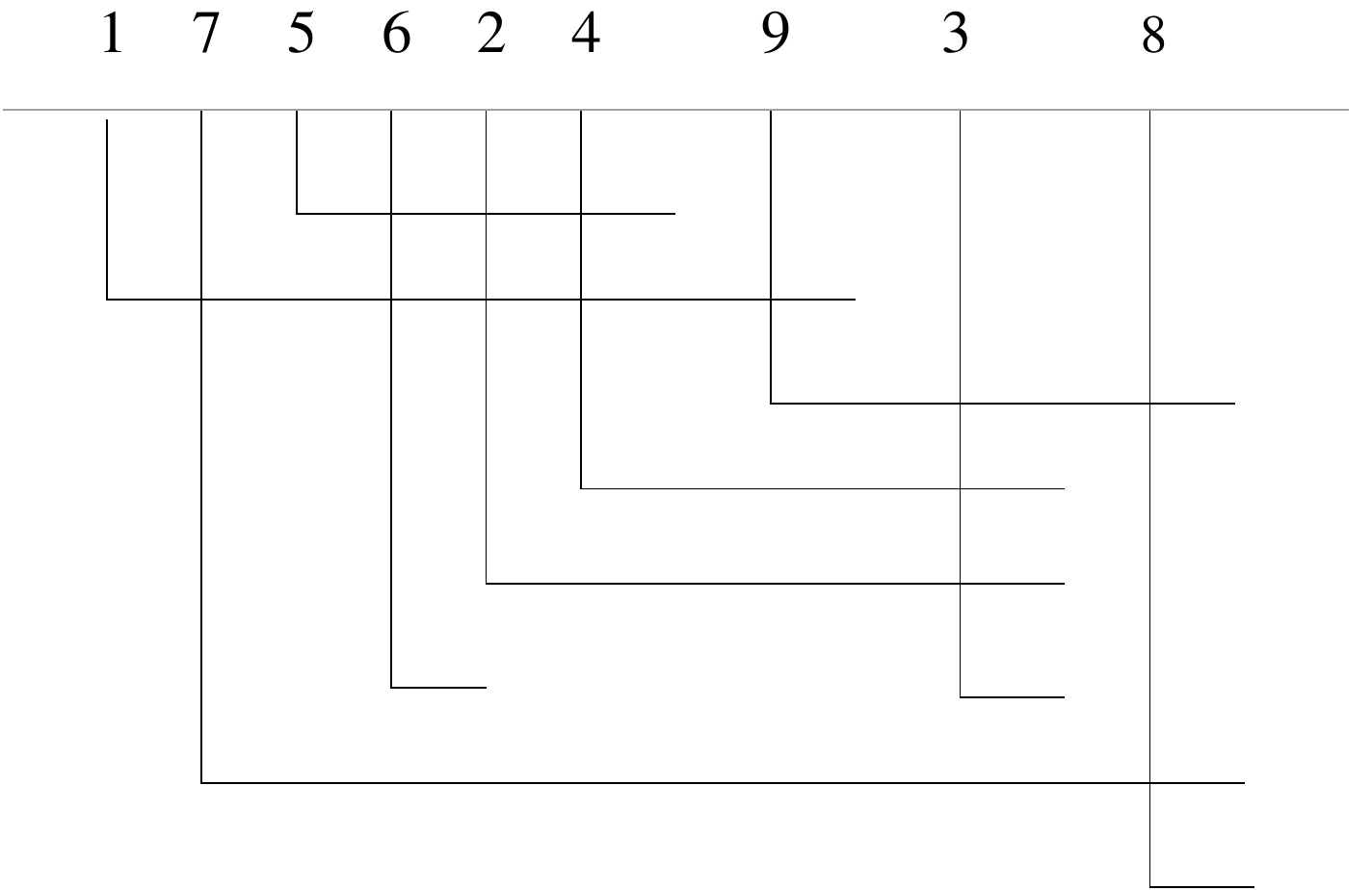_t}} &&
       \includegraphics[scale=0.5]{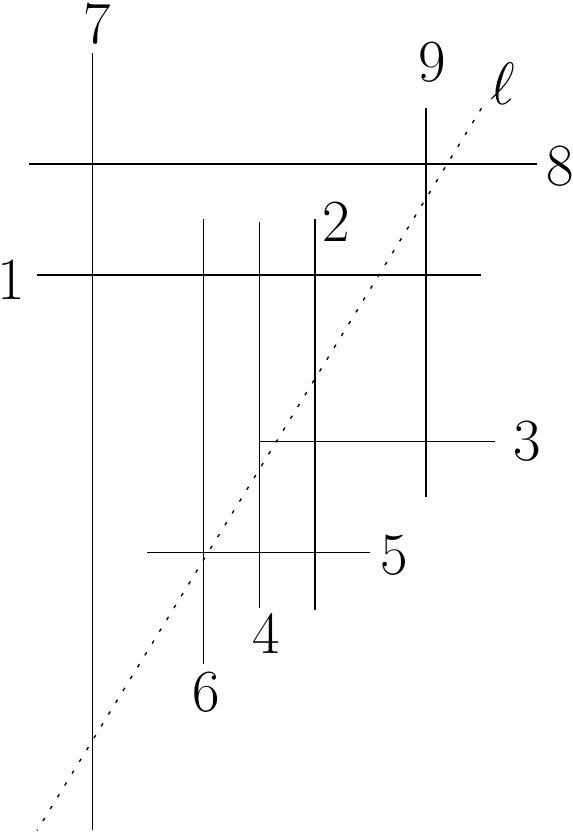} \\
         &  &
    \end{tabular}}
    \caption{A graph (left) with its grounded \LS-shape representation (middle) and  its stabbable grid intersection representation (right).} \label{fig:example-stab}
\end{figure}

\emph{Rectangle intersection graphs} are graphs with boxicity $2$ \ie~intersection graphs of axis-parallel rectangles in the plane. This graph class was introduced by Asplund \& Gr\"{u}nbaum~\cite{asplund1960coloring} in 1960 (even before the introduction of boxicity), who studied the chromatic number of such graphs. After almost three decades, Kratochv{\'\i}l~\cite{kratochvil1994} proved that recognizing rectangle intersection graphs is $\NP$-complete, even if the input graphs are restricted to bipartite graphs. Interestingly, bipartite rectangle intersection graphs are exactly the class of grid intersection graphs (\textsc{GIG}), \ie, intersection graphs of axis-parallel line segments in the plane where no two segments with the same orientation intersect~\cite{hartman1991}. This implies that the recognition of \textsc{GIG} is $\NP$-complete, and motivates the study of its subclasses. Chaplick \etal~\cite{chaplick2018grid} introduced the class of \emph{stabbable grid intersection graphs}. A segment $s$ in the plane is \emph{stabbed} by a line $\ell$ if $s$ and $\ell$  intersect. A graph $G$ is a \emph{stabbable grid intersection graph} ($\StabGIG$) if it has a grid intersection representation such that there exists a straight line that stabs all the segments of the representation. See Figure~\ref{fig:example-stab} for an example. All planar bipartite graphs are $\StabGIG$s~\cite{felsner2013rectangle} and Chaplick \etal\cite{chaplick2018grid} left the problem of recognizing stabbable grid intersection graphs as open. In this paper, we answer their question.


\begin{mdframed}[style=MyFrame]
\vspace{-0.2cm}
\begin{theorem}\label{thm:stabbable-GIG}
Recognizing $\StabGIG$ is $\NP$-complete.
\end{theorem}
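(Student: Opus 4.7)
The plan has two parts: membership in $\NP$ and $\NP$-hardness. For membership, I would show that any $\StabGIG$-representation admits a polynomial-size combinatorial certificate: the horizontal order of the vertical segments, the vertical order of the horizontal segments, and the order in which all segments cross the putative stabbing line $\ell$, together with a tag recording on which side of $\ell$ each endpoint lies. A short feasibility argument (e.g., via a linear program whose variables are the coordinates of endpoints and of the two defining points of $\ell$, and whose constraints encode the combinatorial data and the adjacencies of $G$) gives polynomial-time verifiability.

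For hardness, the plan is to reduce from a planar $\NP$-hard problem, with the natural candidate being \PHPC{} (for which the preamble already reserves notation). Given an instance $H$, the construction would be of the form $H \mapsto \apexgraph{H}$: $k$-subdivide every edge of $H$ (subdivision is unavoidable since GIG graphs are bipartite, and subdividing each edge an appropriate number of times removes odd cycles while preserving the Hamiltonian-path-completion structure in a controlled way) and then attach an apex-style gadget that will be forced, in any $\StabGIG$-representation, to pin down the stabbing line and create a canonical "left/right" partition of the remaining segments around $\ell$. The forward direction is then a straightforward drawing: starting from a Hamiltonian path in $H$, lay the subdivided edges out along the stabbing line, alternating horizontal and vertical segments in the pattern dictated by the path order, and insert the apex segment across the middle.

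The hard direction is where the real content lies. The plan is to show that any $\StabGIG$-representation of $\apexgraph{H}$ must have an essentially unique topology: the apex gadget forces $\ell$ into a canonical position, and the $k$-subdivision of each edge (for $k$ suitably large) leaves no room for the corresponding chain of alternating horizontal/vertical segments to "wrap around" $\ell$ in pathological ways, since each segment is stabbed by $\ell$ exactly once and two segments of the same orientation cannot cross. From this topological rigidity I would read off a cyclic order of edge-endpoints at each vertex that corresponds to a planar embedding in which the subdivided edges form a Hamiltonian path, which translates back to a solution of the original \PHPC{} instance. The main obstacle, and the step that will need the most care, is proving this rigidity lemma: ruling out, by a combinatorial case analysis on how chains of segments can lie on the two sides of $\ell$, all the alternative layouts of the subdivided edges that do not come from a Hamiltonian path. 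If this step proves too delicate for \PHPC{}, a fallback is to reduce instead from the recognition problem of Theorem~\ref{thm:grounded-L}, using the observation that one half-plane determined by the stabbing line already behaves like a grounded \LS-representation with $\ell$ as the ground line.
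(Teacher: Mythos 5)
Your reduction source and construction are the right ones: the paper also reduces from \PHPC{} via the map $G \mapsto \apexgraph{G}$ (odd $k$-subdivision of every edge plus an apex adjacent to all original vertices), and your forward direction -- laying the vertices along the stab line in Hamiltonian-path order and routing the subdivided edges as alternating chains on the two sides of $\ell$ -- is in substance what the paper does, although this step is far from ``straightforward'': it occupies the bulk of Section~\ref{sec:stabgigrep} (the Auer--Glei{\ss}ner two-page layout, the auxiliary graph $H$, the nesting/depth argument via book embeddings, and the $\varepsilon$-offsets needed to keep all subdivision segments stabbed and non-crossing). Your \NP-membership argument (combinatorial certificate plus an LP feasibility check) differs from the paper's rank-normalization to integer coordinates but is a reasonable alternative.

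The genuine gap is the backward direction. You correctly identify the ``rigidity lemma'' as the hard step, but you leave it entirely unproved, and it is the whole content of the hardness claim; moreover your sketch of it is aimed at the wrong target. The paper never proves any rigidity of the stab line or of the layout around $\ell$: it observes that every $\StabGIG$ is a $1$-string graph (segments are strings; touching intersections can be perturbed into crossings) and then invokes Proposition~\ref{prop:yes-instance}, a result from prior work, stating that if $k$ is odd and $\apexgraph{G}$ is a $1$-string graph then $G$ is already a yes-instance of \PHPC. So the stabbing line plays no role whatsoever in the backward direction, and attempting to ``pin down $\ell$'' with an apex gadget is unnecessary (and unlikely to succeed, since nothing forces $\ell$ into a canonical position). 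Your fallback -- reducing from grounded \LS-recognition by treating one half-plane of $\ell$ as a ground line -- does not work as stated either: $\StabGIG$s are bipartite grid intersection graphs, grounded \LS-graphs are not bipartite in general, and no such containment or reduction between the two classes is established in the paper. To close the gap you would either have to prove the $1$-string-to-\PHPC{} implication yourself (essentially reproving the cited result) or carry out the full case analysis you defer, which is a substantial piece of work and not a detail.
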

\vspace{-0.2cm}
\end{mdframed}
We note that recently the computational complexities of several geometric intersection graph classes with "stabbed" representations have been settled. Examples include \emph{stick} graphs, \emph{bipartite hook} graphs and \emph{max point-tolerance} graphs~\cite{rusu2022complexity}. Grounded \LS-graphs and  stabbable grid intersection graphs were the largest classes for which the complexities were still unknown. The recognition problem is still open for some
smaller classes of graphs with "stabbed" representations (see Section~\ref{sec:conclude}).

\medskip\noindent\textbf{Organisation:} In Sections~\ref{sec:grounded-L} and~\ref{sec:stabgigrep}, we prove Theorem~\ref{thm:grounded-L} and Theorem~\ref{thm:stabbable-GIG}, respectively. In Section~\ref{sec:conclude}, we conclude.

\medskip\noindent\textbf{Notations:} For a positive integer $n\geq 1$, $[n]$ denotes the set $\{1,2, \ldots, n\}$. All graphs considered in this paper are simple and undirected. For a graph $G$, the sets $V(G)$ and $E(G)$ denote the vertex set and edge set of $G$, respectively. 

\medskip\noindent\textbf{Convention:} Both for grounded \LS-graphs and StabGIGs, we assume without loss of generality that any two segments in the representation intersect in at most one point. Otherwise, slight changes in the position and length of the segments allow to transform a representation that does not have this property into a representation that has this property.

\section{Proof of Theorem~\ref{thm:grounded-L}}\label{sec:grounded-L}

In our presentation, the geometrical representation of the grounded \LS-graphs uses \HS-shapes above the horizontal ground line rather than \LS-shapes below it. For a grounded \LS-graph, this representation is its {\em grounded \HS-representation}. Note that the convention above implies that all the anchors are distinct.

The \HS-shape representing a vertex $x$ is denoted by $\HS(x)$, and the anchor of $\HS(x)$, now located at the bottom of the vertical segment, 
is denoted by $x$ too. The left to right order of the anchors along the ground line is denoted
by $\prec$. Note that for two intersecting \HS-shapes \HS$(x)$ and \HS$(y)$, we have  $x\prec y$ if and only if the vertical segment of
\HS$(x)$ intersects the horizontal segment of \HS$(y)$. A \HS-shape that has intersections only on its vertical (respectively horizontal) segment
is called a {\em \HS$^v$-shape} (respectively a {\em \HS$^h$-shape}). 
Then the problem we are interested in may be restated as follows:

\Pb
{{\sc Grounded \LS-Graphs Recognition (Grounded \LS-Rec)}}
{A graph $H$.}
{Question}
{Is there a  grounded \HS-representation for $H$?}

Theorem \ref{thm:grounded-L} is equivalent to the statement that {\sc Grounded \LS-Rec} is $\NP$-complete. In order to show it, we use another class of geometric intersection graphs, namely stick graphs. They have been defined by Chaplick et al.~\cite{chaplick2018grid}
as the intersection graphs of a set $A$ of vertical segments in the plane and a set $B$ of horizontal segments in the plane, 
such that the bottom end-point of the segments in $A$ and the left end-point of the segments in $B$ 
belong to a ground straight line with slope -1. Again, all the endpoints may be considered as distinct.
Stick graphs are bipartite graphs, and the aforementioned geometrical representation is called their {\em stick representation}. We prove Theorem \ref{thm:grounded-L} using a reduction from the problem below, which is $\NP$-complete~\cite{rusu2022complexity}:

\Pb
{{\sc Stick Graphs Recognition (StickRec)}}
{A bipartite graph $G=(A\cup B, E)$.}
{Question}
{Is there a  Stick representation of $G$ with sets $A$ and $B$?}

Grounded \LS-graphs and stick graphs are related by the relationships given in
Propositions \ref{prop:StickToL} and \ref{prop:LtoStick} below, which are essential to our construction. A grounded \HS-representation of a bipartite graph $G=(A\cup B, E)$ is called {\em nice} if  the vertices in $A$ are represented  by \HS$^h$-shapes, and those in $B$ by \HS$^v$-shapes.
We use the notations ${\mathsf H}(a)$  and ${\mathsf V}(c)$ for the horizontal and vertical segments of \HS$(c)$ respectively, where $c\in A\cup B$.

 \begin{figure}[t!]
\centering
\scalebox{0.7}{\input{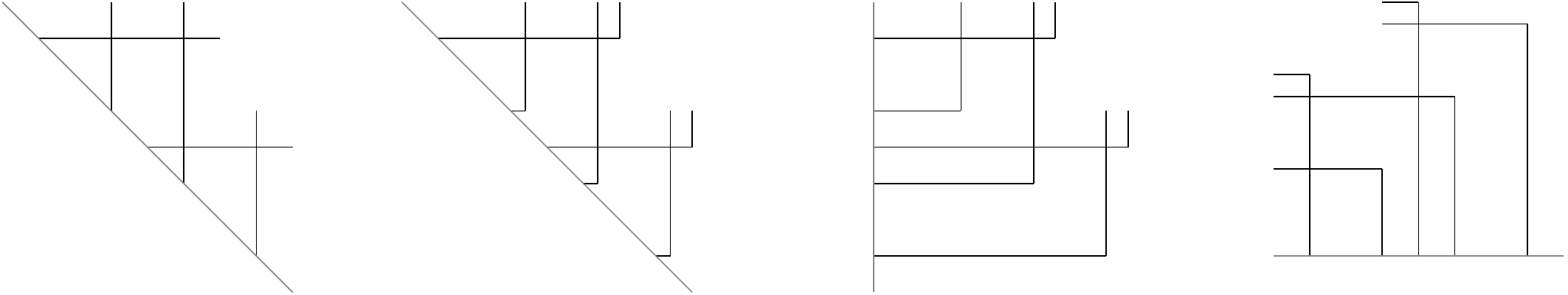_t}}
\caption{\small Transforming a stick representation into a grounded \HS-representation. Here, the stick graph is the induced path $a_1b_1a_2b_2a_3$.}
\label{fig:StickToL}
\end{figure}

\begin{proposition}
 All stick graphs  are grounded \LS-graphs. Moreover, each stick graph  has a nice grounded \HS-representation.
 \label{prop:StickToL}
\end{proposition}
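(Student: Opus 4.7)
My plan is to convert any given stick representation of $G=(A\cup B,E)$ directly into a nice grounded \HS-representation by preserving the $x$-coordinates of the anchors and choosing the heights and horizontal lengths of the \HS-shapes so that the two ``reach'' conditions characterizing when two \HS-shapes intersect precisely encode the two nontrivial adjacency conditions of the stick representation.

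Concretely, write each stick anchor as $(t_v,-t_v)$, let $y_a$ be the $y$-coordinate of the top of $a$'s vertical segment for $a\in A$, and $z_b$ the $x$-coordinate of the right endpoint of $b$'s horizontal segment for $b\in B$. After a uniform shift so that every $t_v$ is positive, I place the ground line at $y=0$, anchor each vertex $v$ at $(t_v,0)$, and set: for each $a\in A$, $\mathsf{V}(a)$ of height $h_a:=t_a$ and $\mathsf{H}(a)$ of length $L_a:=t_a+y_a$ (which is non-negative since $y_a\geq -t_a$); for each $b\in B$, $\mathsf{V}(b)$ of height $H_b:=z_b$ and $\mathsf{H}(b)$ of uniformly tiny length $\varepsilon$. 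The key check is that for $u\prec v$ (i.e.\ $t_u<t_v$), the two \HS-shapes intersect iff $L_v\geq t_v-t_u$ and $h_u\geq h_v$: when $u=b\in B$ and $v=a\in A$ this simplifies to ``$y_a\geq -t_b$ and $z_b\geq t_a$'', which together with $t_b<t_a$ are exactly the three adjacency conditions in the stick representation. In the remaining three cases ($u,v$ both in $A$; both in $B$; or $u=a,v=b$) either the height condition $t_u\geq t_v$ fails or the length $\varepsilon$ is too small, so no unwanted intersection occurs. Therefore the intersection graph is $G$, the intersections of each $a\in A$ lie only on $\mathsf{H}(a)$, and the intersections of each $b\in B$ lie only on $\mathsf{V}(b)$, yielding a nice representation.

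The verification of the four cases above is routine algebra; the step that requires actual thought is the choice of the parameters themselves. My choice is motivated by the shear $(x,y)\mapsto(x,y+x)$, which straightens the slope-$-1$ ground line into the horizontal line $y=0$, keeps each $a$-vertical vertical, and turns each $b$-horizontal into a $45^\circ$ up-right diagonal; the $a$-verticals are then ``capped'' with horizontals of length $t_a+y_a$ and the $b$-diagonals are ``stretched'' into verticals of height $z_b$, so that each crossing of an $a$-vertical with a $b$-diagonal becomes a crossing of $\mathsf{H}(a)$ with $\mathsf{V}(b)$.
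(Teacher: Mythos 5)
Your construction is correct and is essentially the paper's argument: both proofs directly convert the given stick representation into a nice grounded \HS-representation by an affine manipulation that turns the slope~$-1$ ground line into the ground line of the \HS-representation, with the $A$-vertices becoming \HS$^h$-shapes and the $B$-vertices \HS$^v$-shapes. The only difference is presentational: the paper extends the horizontal segments leftward to a common vertical line and rotates by $90\degree$, whereas you apply a shear and verify the resulting intersection conditions by explicit coordinate algebra.
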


\begin{proof} Let $G$ be a stick graph. Figure \ref{fig:StickToL} illustrates the steps of the proof. Consider a stick representation of $G$, and replace first each vertical (respectively horizontal) segment 
 with a \RS-shape having the same anchor as the initial segment, whose horizontal (respectively vertical) segment is not used for intersections.  Then extend the horizontal segments of the  \RS-shapes towards left so that to place all the anchors on the same vertical line, that becomes the new ground line. Finally, perform a 90 degrees counterclockwise rotation
 of the whole representation to obtain a grounded \HS-representation of $G$. In this representation, the vertices in $A$ are represented  
 by \HS$^h$-shapes, and those in $B$ by \HS$^v$-shapes.
\end{proof}

\begin{remark}
Not all the bipartite grounded \LS-graphs are stick graphs. The bipartite graph in Figure \ref{fig:NotStick} is not a stick graph  \cite{chaplick2018grid}, but is a grounded \LS-graph as shown by the \HS-representation we provide.
\end{remark}

\begin{proposition}
All bipartite grounded \LS-graphs which have a nice grounded \HS-representation  are stick graphs.
 \label{prop:LtoStick}
\end{proposition}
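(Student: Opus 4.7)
The plan is to invert the construction used in Proposition~\ref{prop:StickToL}. Starting from a nice grounded \HS-representation of $G=(A\cup B, E)$ with anchor positions $\xi_c$ on the ground line $y=0$, anchor order $\prec$, horizontal lengths $\ell_a$ for $a\in A$, and heights $\eta_c$ of the top of $\HS(c)$, I first rescale the $y$-coordinates by a large factor $M$ (this preserves horizontal and vertical segments and all their crossings, so the intersection pattern is unchanged). I then apply a $90$-degree clockwise rotation to the whole representation. The horizontal ground line becomes a vertical line $\ell_0$, and each \HS-shape becomes an L-shape whose anchor lies on $\ell_0$. In the rotated picture, for each $a\in A$ (originally an \HS$^h$-shape) the original horizontal segment, which carries all of $a$'s intersections, becomes a vertical segment at $x=M\eta_a$; symmetrically, for each $b\in B$ (originally an \HS$^v$-shape) the original vertical segment of $\HS(b)$ becomes a horizontal segment at $y=-\xi_b$ that carries all of $b$'s intersections.

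Next, I retain only the intersection-carrying segment of each L-shape: a vertical for each $a\in A$ and a horizontal for each $b\in B$. This yields a family of axis-parallel segments whose intersection graph is still exactly $G$, but whose grounded endpoints (bottoms of verticals, lefts of horizontals) lie on $\ell_0$ rather than on a line of slope $-1$. To remedy this, I choose a line $\ell$ of slope $-1$ placed far below and to the right of the representation, and extend each $a$-vertical downward and each $b$-horizontal leftward to meet $\ell$. Thanks to the prior $y$-scaling by $M$, the leftward extensions of the $b$-horizontals enter a region (with $x<0$) that contains no $a$-vertical, and the downward extensions of the $a$-verticals reach $\ell$ through a band where the only potentially crossing $b$-horizontals are exactly those the $a$-vertical already crossed in the rotated picture.

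Finally, I verify that the resulting configuration is a valid stick representation with set $A$ as the verticals and set $B$ as the horizontals, whose intersection graph is $G$. This reduces to checking that the three geometric conditions for a vertical and a horizontal stick segment to meet on a slope-$(-1)$ ground line — the order of their grounded endpoints along the ground line, the $y$-extent of the $a$-vertical, and the $x$-extent of the $b$-horizontal — correspond exactly, pair by pair, to the three conditions $\xi_b\leq\xi_a$, $\xi_b\geq\xi_a-\ell_a$, and $\eta_a\leq\eta_b$ that characterize edges $ab$ in the nice \HS-representation. The main obstacle is precisely this correspondence: one must choose the scaling $M$, the slope-$(-1)$ line $\ell$, and the extension amounts so that every edge of $G$ is realized by an honest crossing in the stick picture while no spurious crossing is introduced; the "niceness" of the \HS-representation (the clean separation of $A$ into \HS$^h$-shapes and $B$ into \HS$^v$-shapes, which aligns the intersection-carrying segments in orthogonal directions) is exactly what makes this simultaneous matching possible.
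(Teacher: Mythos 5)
Your overall strategy (rotate a nice \HS-representation by $90\degree$, keep only the intersection-carrying segment of each shape, and ground the result on a slope-$(-1)$ line) is the right general shape, but the step where you ground the segments has a genuine gap, and it is exactly the step the paper spends most of its proof on. After your rotation, the vertical segment coming from $a\in A$ sits at $x=M\eta_a$ with bottom endpoint at $y=-\xi_a$, while the horizontal segment coming from $b\in B$ sits at $y=-\xi_b$ and reaches rightward to $x=M\eta_b$. To place the bottom endpoints on a single slope-$(-1)$ line $y=-x+c$ that also passes through the left endpoints of all the $b$-horizontals, you are forced to take $c\leq -\max_b \xi_b$, so the $a$-vertical must be extended downward from $y=-\xi_a$ all the way to $y=-M\eta_a+c$. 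This extension sweeps through the height $y=-\xi_b$ of \emph{every} $b$ with $\xi_b>\xi_a$, and it creates a spurious crossing with any such non-neighbour $b$ that additionally satisfies $\eta_b\geq\eta_a$ (so that its horizontal reaches $x=M\eta_a$). Such pairs exist already in tiny examples (one $a$ adjacent to $b_1$ but not to $b_2$, with $b_2$'s anchor to the right of $a$'s and $b_2$'s vertical segment tall). The $y$-scaling by $M$ cannot repair this: the offending conditions $\xi_b>\xi_a$ and $\eta_b\geq\eta_a$ are invariant under scaling, so the spurious crossing persists for every $M$. Your final paragraph correctly identifies this simultaneous-matching problem as "the main obstacle," but the claim that the scaling resolves it is false.

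What is missing is a normalization of the representation \emph{before} grounding. The paper's proof establishes (by induction on the vertices of $A$ in decreasing order of the right endpoint $r(a)$ of ${\mathsf H}(a)$) that the heights can be modified so that $h(a)=r(a)$ for every $a\in A$, i.e.\ the right endpoints of all the segments ${\mathsf H}(a)$ lie on one common line $\ell$ of slope $1$, without creating or destroying any intersection. Once this holds, no extension is needed at all: each ${\mathsf H}(a)$ is already grounded on $\ell$ at its right endpoint, each ${\mathsf V}(b)$ is merely truncated at $\ell$ (which loses no intersection, since every intersection point $(\xi_b,r(a))$ with $\xi_b\leq r(a)$ lies on or above $\ell$), and a $90\degree$ rotation gives the stick representation. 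In your coordinates, requiring one line $y=-x+c$ to pass through all the points $(M\eta_a,-\xi_a)$ amounts to requiring $M\eta_a-\xi_a$ to be constant in $a$, which is precisely this normalization in disguise; you would need to prove it can always be achieved, and that is the content of the paper's inductive argument.
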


\begin{proof}
Let $G$ be a bipartite graph with  a nice grounded \HS-representation $R$.  We transform this representation into a stick representation of $G$ as follows.
Draw a line $\ell$ with slope 1 that intersects the grounded line to the left of all the anchors, and such that all the
segments ${\mathsf H}(a)$, for $a\in A$, lie in the lower half-plane defined by $\ell$.  Such a line always exists, it is sufficient to consider the intersection point sufficiently far to the left. Consider a Cartesian coordinate system whose origin is the intersection point of the two lines, the $x$-axis is the grounded line and the $y$-axis is oriented upwards.   

For each $c\in A\cup B$, denote by $l(c)$ and $r(c)$  the $x$-coordinate of the left and right endpoint of ${\mathsf  H}(c)$ respectively, and by $h(c)$ the common $y$-coordinate of these two endpoints. For $c,c'\in A\cup B$, we say that ${\mathsf  H}(c)$ {\em is below} ${\mathsf  H}(c')$ (equivalently, ${\mathsf  H}(c')$ is {\em above} ${\mathsf  H}(c)$) whenever $h(c)<h(c')$ and the projections of ${\mathsf  H}(c)$ and ${\mathsf  H}(c')$ on the $x$-axis intersect.

Since $\ell$ is above ${\mathsf  H}(a)$, for $a\in A$, we deduce that $h(a)<l(a)<r(a)$.  We show that the heights of the \HS-shapes can be modified so as to place the right endpoint of ${\mathsf  H}(a)$ on the line $\ell$, for each $a\in A$. Equivalently, we show that: 

\bigskip

($P$) There exists a grounded \HS-representation $R'$ of $G$ such that $h(a)=r(a)$ for each $a\in A$.
\bigskip

Let $a_1, a_2, \ldots, a_{|A|}$ be the vertices in $A$, in decreasing order of their value $r(a)$ in the grounded \HS-representation $R$. We use induction to show that, for each $1 \leq i\leq |A|$, there exists a grounded \HS-representation $R_i$ of $G$ such that $h(a_k)=r(a_k)$ for $1 \leq k\leq i$. Note that the for each $u,v$ such that $l(a_u)<r(a_v)<r(a_u)$ we necessarily have $h(a_u)>h(a_v)$. In the contrary case, \HS$(a_u)$ and \HS$(a_v)$ would intersect, which is impossible since $G$ is bipartite and $A$ is one of its parts. Then, the order $a_1, a_2, \ldots, a_{|A|}$ of the
vertices implies that, for $u<v$, either the projections ${\mathsf  H}(a_v)$ and ${\mathsf  H}(a_u)$ on the $x$-axis are disjoint and in this order from left to right, or ${\mathsf  H}(a_v)$ is below ${\mathsf  H}(a_u)$.

Let $i=1$. As noticed above, no ${\mathsf  H}(a_j)$ with $j\neq 1$ is above ${\mathsf  H}(a_1)$. We then modify \HS$(a_1)$ so that $h(a_1)$ becomes equal to $r(a_1)$, and appropriately lengthen the vertical segment of \HS$(b)$ for each neighbor $b$ of $a_1$ so that \HS$(b)$ intersects \HS$(a_1)$ (equivalently, such that ${\mathsf H}(b)$ is above ${\mathsf H}(a_1)$). Note that this transformation does not modify the intersections between \HS$(b)$ and the other \HS-shapes, since the segment ${\mathsf  V}(b)$ -- which realizes
by hypothesis all the intersections of \HS$(b)$ with other \HS-shapes -- is neither moved (to left or right) nor shortened. 
Therefore, this is a new grounded \HS-representation of $G$, denoted $R_1$. 

Assume now, by inductive hypothesis, that the grounded \HS-representation $R_i$ of $G$ exists. In $R_i$, the only segments ${\mathsf  H}(a_j)$ that are possibly placed above ${\mathsf  H}(a_{i+1})$ satisfy $j\leq i$, and by the inductive hypothesis they also satisfy $h(a_j)=r(a_j)$. 
The ordering of the vertices in $A$ implies that $r(a_j)>r(a_{i+1})$. As above,  we  modify \HS$(a_{i+1})$ so that $h(a_{i+1})$ becomes equal to $r(a_{i+1})$. Then $h(a_{i+1})=r(a_{i+1})<r(a_j)=h(a_j)$, meaning that every ${\mathsf  H}(a_j)$ which was previously above ${\mathsf  H}(a_{i+1})$ is still above
${\mathsf  H}(a_{i+1})$, so that no wrong intersection is created between \HS$(a_{i+1})$ and \HS$(a_j)$. Furthermore, for the neighbors $b$ of $a_{i+1}$
whose horizontal segment is now below ${\mathsf  H}(a_{i+1})$, the segment ${\mathsf  V}(b)$ is appropriately lengthened such that $h(a_{i+1})<h(b)<h(a_i)$
(equivalently, such that ${\mathsf H}(b)$ is above ${\mathsf H}(a_{i+1})$
and below ${\mathsf H}(a_{i})$).
Similarly to the case $i=1$, these modifications do not modify the existing intersections between \HS-shapes, so that we obtain the sought
grounded \HS-representation of $G$ denoted $R_{i+1}$. 

Then ($P$) is proved, with $R'=R_{|A|}$.
To obtain a stick representation of $G$, in $R'$ consider only the line $\ell$ as well as the segments ${\mathsf  H}(a)$ for $a\in A$ and 
the sub-segment of ${\mathsf  V}(b)$ that lies above $\ell$, for $b\in B$.
A 90 degrees clockwise rotation of this representation is a stick representation of $G$ (with a ground line of slope -1).
\end{proof}

 \begin{figure}[t!]
\centering
\scalebox{0.4}{\input{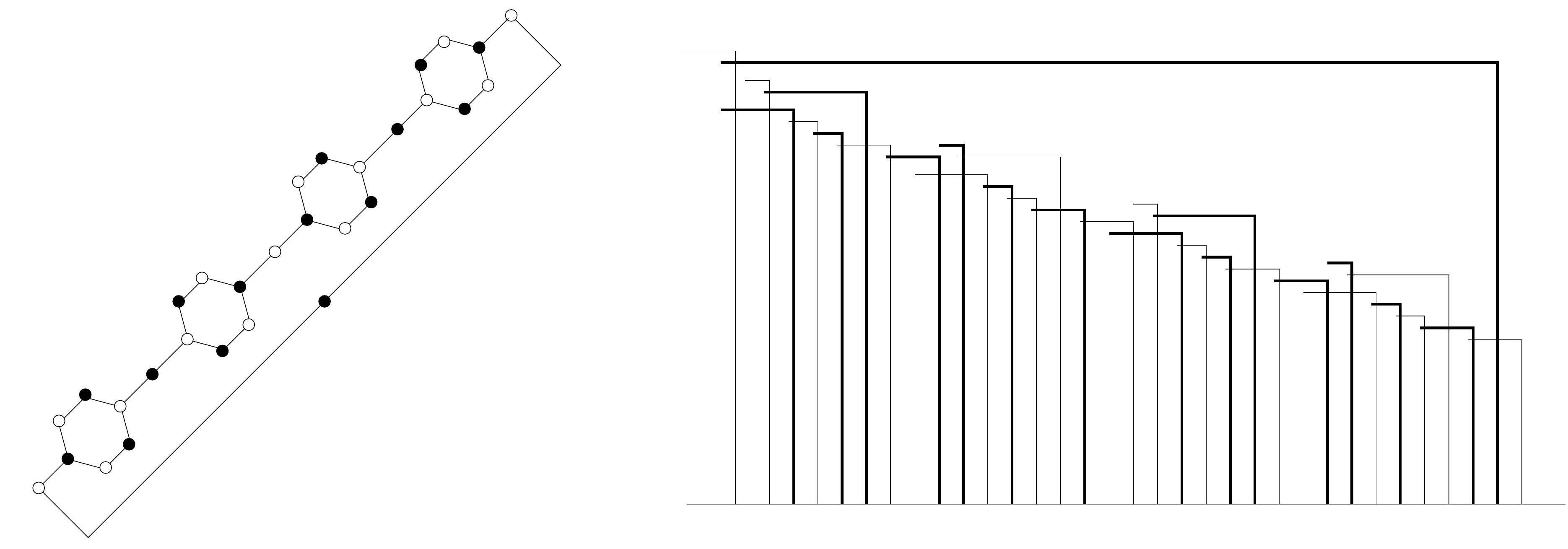_t}}
\caption{\small An example of a bipartite grounded \LS-graph, which is not a stick graph. The bipartition is indicated by the two colours of the vertices, white and black. In the grounded \HS-representation presented here, the thick \HS-shapes correspond to the black vertices.}
\label{fig:NotStick}
\end{figure}

\subsection{The reduction}

Our reduction from {\sc StickRec} to {\sc Grounded \LS-Rec} transforms a given bipartite graph $G=(A\cup B,E)$ into a (non-bipartite) graph $H$ that is roughly a copy of $G$ where each vertex $x$ from $B$ is placed inside a gadget graph called $\Lambda(x)$. The role of the gadgets is to force the grounded \HS-representation of $G$ produced inside any grounded \HS-representation of $H$ (if any) to be nice, that is, to represent the vertices in $A$ by \HS$^h$-shapes and the vertices in $B$ by \HS$^v$-shapes. In this way, using Propositions~\ref{prop:StickToL} and \ref{prop:LtoStick}, a grounded \HS-representation exists for $H$ if and only if a stick representation exists for $G$.

More formally, let $\Lambda(x)$ be the graph in Figure \ref{fig:GammaAndReduc} (left). The vertex denoted $x$ also has a second label, namely $2$, which is easier to use in the proof of Proposition \ref{prop:GammaUnique}, rather than $x$. The reduction is simple, and is defined below. Consider an instance $G=(A\cup B, E)$ of {\sc StickRec}, and build an instance $H$ of {\sc Grounded \LS-Rec} as follows (see Figure \ref{fig:GammaAndReduc} (right) for an example):

\begin{enumerate}
 \item for each $a\in A$, define a vertex $a$ of $H$. 
 \item for each $b\in B$, include $\Lambda(b)$ into $H$. For each pair $b\neq b'$, the graphs $\Lambda(b)$ and  $\Lambda(b')$ are
 vertex-disjoint.
 \item for each edge $ab\in E$ with $a\in A$ and $b\in B$, add an edge from $a$ to every vertex in $\Lambda(b)$.
\end{enumerate}

A vertex $u$ that is adjacent to every vertex in $\Lambda(x)$ is called {\em universal} with respect to $\Lambda(x)$. 
\newpage

We show in the next section that this reduction proves the $\NP$-completeness of {\sc Grounded \LS-Rec}. To this end, we follow the following steps:\nopagebreak
\begin{enumerate}
    \item[$(i)$] \label{enum:LambdaUnique} We show in Proposition \ref{prop:GammaUnique} that $\Lambda(x)$ accepts exactly two grounded \HS-representations, shown in Figure \ref{fig:Gamma2Rep}(c) (details are given later) and roughly represented as in Figure \ref{fig:Gamma2Rep}(d).
     \item[$(ii)$] \label{enum:universal} We show in Proposition \ref{prop:universal} that adding to $\Lambda(x)$ a universal vertex $u$ yields another grounded \LS-graph, whose grounded \HS-representations always place the anchor of \HS$(u)$ outside the representation of $\Lambda(x)$, to the right of it.  Consequently, $x$ ($u$ respectively) is represented by a \HS$^v$-shape (\HS$^h$-shape respectively). Recalling that in our construction of $H$ each vertex $a\in A$ is universal with respect to $\Lambda(b)$, for each of its neighbors $b$ from $G$, this implies that the grounded
     \HS-representations of $H$ mimic those of $G$ and vice-versa, as shown in Figure \ref{fig:GroundedH}.
     \item[$(iii)$] \label{enum:iff} We show in Proposition \ref{prop:yes} that $G$ is a stick graph if and only if $H$ is a grounded \LS-graph.
     \item[$(iv)$] \label{enum:NPc} We conclude the proof of Theorem \ref{thm:grounded-L}.
\end{enumerate}

 \begin{figure}[t!]
\centering
\scalebox{0.45}{\input{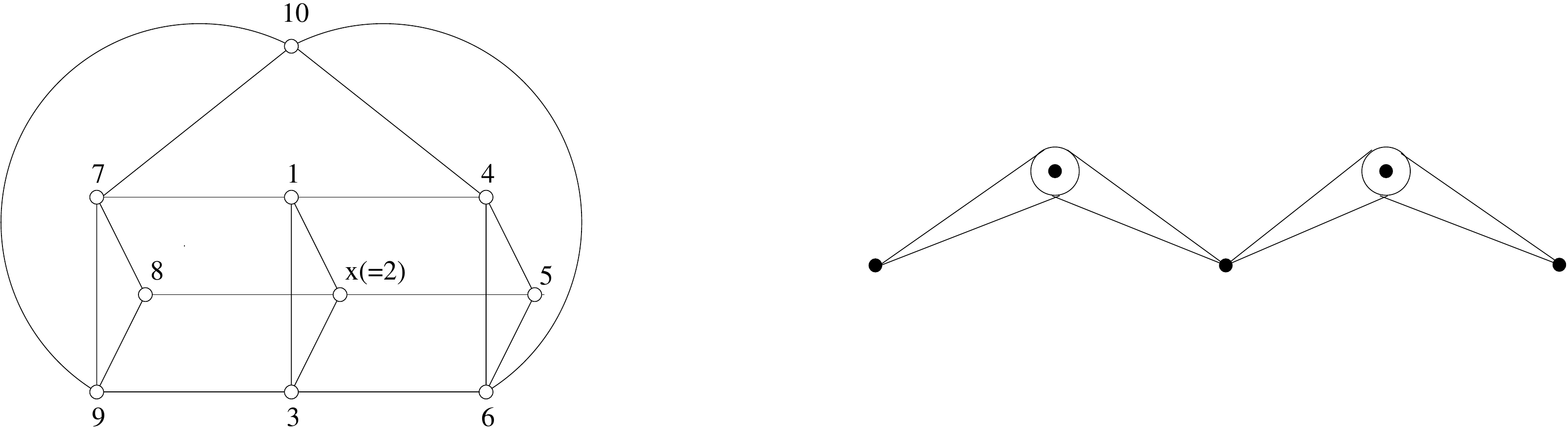_t}}
\caption{\small The graph $\Lambda(x)$ (left) and the construction of the graph $H$ when $G$ is the path $a_1b_1a_2b_2a_3$ (right).}
\label{fig:GammaAndReduc}
\end{figure}

\subsection{The proofs}\label{subsec:theproofs}

The graph $\Lambda(x)$ has been chosen so as to have very few and convenient grounded \HS-representations. In any grounded \HS-representation of $\Lambda(x)$, the left to right order of the anchors $1, 2, 3$ defines a permutation $def$ on $\{1,2,3\}$. For $i\in\{1,2,3\}$, let
 $v^1_i$ be the neighbor of $i$ from $\{7,8,9\}$ and $v^2_i$ the neighbor of $i$ from $\{4,5,6\}$.
 



\begin{proposition}
The graph $\Lambda(x)$ is a grounded \LS-graph. It admits exactly two grounded \HS-representations, given in Figure \ref{fig:Gamma2Rep}(c),
where we have either $d=1$ and $f=3$, or $d=3$ and $f=1$.
\label{prop:GammaUnique}
\end{proposition}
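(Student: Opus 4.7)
The plan is to separate existence from uniqueness. For existence, I would verify by direct inspection that the two diagrams in Figure~\ref{fig:Gamma2Rep}(c) realize the adjacencies of $\Lambda(x)$: for each pair of \HS-shapes in the picture, check that they intersect if and only if the corresponding vertices are adjacent in $\Lambda(x)$. This simultaneously shows that $\Lambda(x)$ is a grounded \LS-graph and establishes the ``at least two representations'' direction.

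For uniqueness, let $R$ be any grounded \HS-representation of $\Lambda(x)$, and analyse the left-to-right order $d,e,f$ of the anchors $1,2,3$; there are $3!=6$ candidates. The goal is to exclude four of them and leave only $(d,e,f)=(1,2,3)$ and $(3,2,1)$. The essential tool is the characterization that, for anchors $i \prec j$, \HS$(i)$ and \HS$(j)$ intersect iff ${\mathsf V}(i)$ crosses ${\mathsf H}(j)$, which in turn constrains where the anchors of the adjacent vertices $v^1_i, v^2_i$ can lie along the ground line and which shape type (\HS$^v$ or \HS$^h$) each such vertex must have. For every forbidden ordering I would exhibit a witness: typically a pair of outer vertices whose prescribed intersection with one anchor would, under that ordering, force a spurious intersection with another anchor or with each other, contradicting $E(\Lambda(x))$. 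The symmetry $1 \leftrightarrow 3$ of $\Lambda(x)$, with $x=2$ playing the distinguished role, halves the case analysis.

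Once the ordering is reduced to $(1,2,3)$ (up to the reflection giving $(3,2,1)$), I would argue that the placement of the remaining six anchors in $\{4,5,6,7,8,9\}$ along the ground line and the shape types \HS$^v$/\HS$^h$ of each such vertex are forced by its required intersections with $1,2,3$ together with its forbidden intersections with the other anchors and with its non-neighbours in $\{4,\dots,9\}$. Propagating these constraints should leave exactly the representation drawn in Figure~\ref{fig:Gamma2Rep}(c). The main obstacle I expect is the ordering-reduction step: cleanly identifying, for each of the four bad orderings, the specific pair whose geometry breaks the edge pattern. I would approach this by formalising a local rule such as ``if an outer vertex $v$ has its anchor strictly between two anchors from $\{1,2,3\}$, then \HS$(v)$ must be of type \HS$^v$'', and then propagating these rules until a contradiction emerges; the rigidity of the adjacencies between $\{1,2,3\}$, $\{4,5,6\}$ and $\{7,8,9\}$ should ensure that any deviation from the two canonical representations quickly produces a conflict.
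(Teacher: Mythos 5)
Your plan has the right outer shape (existence by inspecting Figure~\ref{fig:Gamma2Rep}(c), uniqueness by constraining the left-to-right order of the anchors $1,2,3$), but it misses the one idea that actually makes uniqueness work: the role of vertex $10$. The subgraph $\Lambda^9$ induced by vertices $1$ through $9$ is \emph{not} rigid in the way you assume. Writing $d\prec e\prec f$ for the anchors of $1,2,3$ in an arbitrary representation, the paper shows that $\Lambda^9$ admits two genuinely different families of representations: one in which the two neighbours $v_e^1,v_e^2$ of the middle anchor lie on opposite sides of the block $[d,f]$ (Figure~\ref{fig:Gamma2Rep}(a)), and one in which they lie on the same side, which forces anchors such as $v_f^1$ and $v_d^2$ to sit strictly between anchors of $\{1,2,3\}$ (Figure~\ref{fig:Gamma2Rep}(b)). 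Both families exist for \emph{every} choice of which of $1,2,3$ plays the role of $e$. So your proposed local rule (``an outer anchor strictly between two anchors of $\{1,2,3\}$ forces a conflict'') is false for $\Lambda^9$, and no amount of propagating constraints among $\{1,2,3\}$, $\{4,5,6\}$, $\{7,8,9\}$ will exclude the four bad orderings or the second family.

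What kills the spurious configurations is the tenth vertex, which is adjacent to $v_i^1,v_i^2,v_j^1,v_j^2$ for exactly two indices $i\neq j$ in $\{1,2,3\}$ (namely $1$ and $3$) and to none of $1,2,3$ themselves. Geometrically, in the same-side family the middle anchor $e$ separates $v_d^1$ from $v_d^2$ and $v_f^1$ from $v_f^2$, so \HS$(10)$ cannot reach its four neighbours without crossing \HS$(e)$; this eliminates Case~2 outright. In the opposite-side family, $d$ separates $v_e^1$ from $v_e^2$, so $10$ cannot be adjacent to the neighbours of $e$; since $2$ is the unique vertex of $\{1,2,3\}$ whose outer neighbours avoid $10$, this forces $e=2$ and hence $\{d,f\}=\{1,3\}$, and it also pins down the previously ambiguous choices for the anchors of $v_d^2$ and $v_f^1$. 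Without invoking vertex $10$ your case analysis cannot close, so as written the proposal has a genuine gap rather than being an alternative route to the same result.
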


\begin{proof}

 When \HS$(d)$, \HS$(e)$ and \HS$(f)$ (only) are placed on the ground line such that $d\prec e\prec f$, the \HS-shapes of the two neighbors $v_e^1$ and $v_e^2$ of $e$ may be placed (independently of each other) in exactly two locations:
 
 \begin{enumerate}[itemsep =0cm]
 \item[$\bullet$] either to the left of the anchor $d$, 
 \item[$\bullet$] or to the right of the anchor $f$. 
 \end{enumerate}
 
 Indeed, we cannot have $e\prec v_e^i\prec f$, for some $i\in\{1,2\}$, as explained hereafter. If, for instance, $e\prec v_e^1\prec f$, then we must have $d\prec v_d^1, v_f^1, 10\prec e$ in order to satisfy all the adjacencies and non-adjacencies. Furthermore,  $d\prec v_f^2\prec e$ is not possible, since then \HS$(v_f^2)$ intersects either \HS$(v_d^1)$ or \HS$(v_e^1)$ before it intersects \HS$(f)$, a contradiction. We then have $v_f^2\prec d$ or $e\prec v_f^2$, which implies that $v_d^2$ and $v_e^2$ are
 also outside the interval defined by $d$ and $e$. But then 10 cannot have neighbors among $v_d^2, v_e^2$ and $v_f^2$, a contradiction.
 
 
 Two cases appear, for  which the proofs are similar. In each case, we first deduce the possible grounded \HS-representations of the subgraph $\Lambda^9$ induced  by the vertices 1 to 9, using the variables $i, v_i^1, v_i^2$ for $1\leq i\leq 3$. For each of these representations, we then test the possibility
 to add \HS$(10)$.

 \medskip



 {\bf Case 1.} $v_e^1$ and $v_e^2$ belong to different locations (see Figure~\ref{fig:Gamma2Rep}(a)). Assume w.l.o.g. that $v_e^2\prec d$ and $f\prec v_e^1$. Then, there is exactly one possible position for $v_f^2$ and two possible positions for $v_f^1$ (the anchors are denoted by $v_f^1$ and $v_f^{'1}$) so as
 to obtain the sought intersections with $f$ and $v_e^2, v_e^1$ respectively. Similarly, only one position is possible for $v_d^1$ and two positions for $v_d^2$ (the anchors are denoted by $v_d^2$ and $v_d^{'2}$). 
 Figure~\ref{fig:Gamma2Rep}(a)
 thus records all the possible representations of the graph induced by the vertices 1 to 9: any possible representation is obtained by choosing 
 one of $v_f^1, v_f^{'1}$ and one of $v_d^2, v_d^{'2}$. 
 
 The vertex $10$ is adjacent to $v_i^1, v_i^2, v_j^1, v_j^2$, with $i\neq j$ and $i,j\in\{d,e,f\}$.  As $d$ separates $v_e^1$ and $v_e^2$, and
 $10$ is not adjacent to $d$, we deduce that $i,j\neq e$. Moreover, $\{i,j\}=\{d,f\}$ is possible, as shown in Figure~\ref{fig:Gamma2Rep}(c) 
 where an appropriate position is proposed for \HS$(10)$. This position is unique and implies that the anchor $v_d^2$ (and not $v_d^{'2}$) has to
 be used, as well as the anchor $v_f^1$ (and not $v_f^{'1}$).
 
 Since $e$ is the unique vertex among $\{1,2,3\}$ none of
 whose neighbors is adjacent to $10$, we deduce that $e=2$, and thus $\{d,f\}=\{1,3\}$. In conclusion, we have two possible \HS-representations for 
 $\Lambda(x)$, with $d=1$ and $f=3$, and vice-versa.

     \begin{figure}[t]
\centering
\scalebox{0.3}{\input{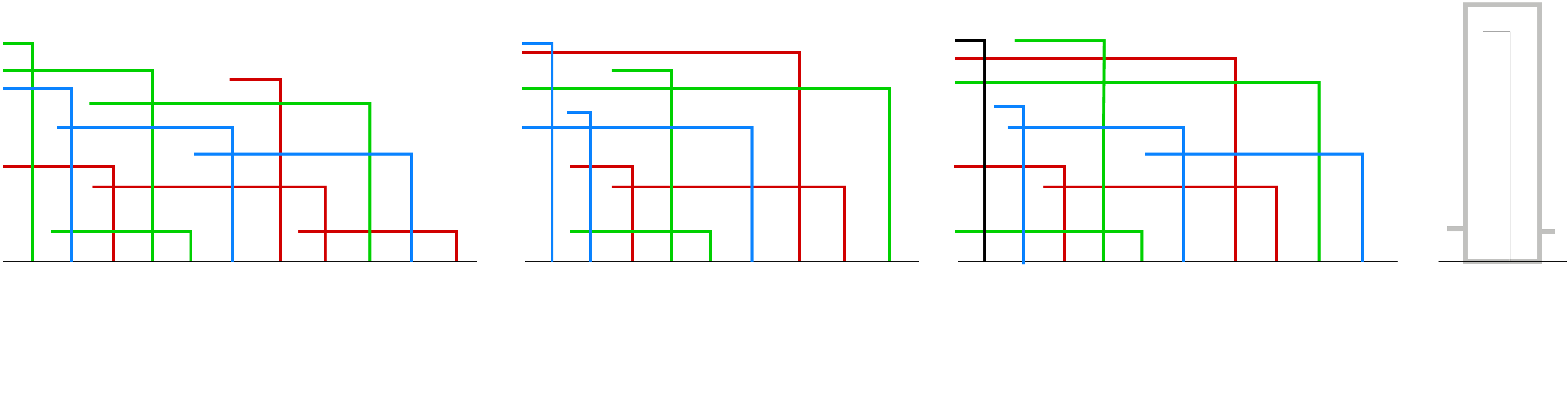_t}}
\caption{\small Representations of $\Lambda(x)$: (a) Case 1 for $\Lambda^9$. (b) Case 2 for $\Lambda^9$. (c) Possible grounded \HS-representations for  $\Lambda(x)$. (c) The drawing replacing any representation, when only \HS$(x)$ and the level of the lowest horizontal segment are needed.}
\label{fig:Gamma2Rep}
\end{figure}

\medskip


{\bf Case 2.} $v_e^1$ and $v_e^2$ belong to the same location. If they are both on the right side of $f$, with (for instance) $f\prec v_e^1\prec v_e^2$, then there is no possible place for \HS$(v_d^1)$ so as to intersect \HS($v_e^1)$ and \HS$(d)$ but not $\HS(v_e^2)$ nor $e$. Thus 
 we may assume w.l.o.g that $v_e^1\prec v_e^2\prec d$. We cannot have $v_d^1\prec v_e^1$, since then \HS$(v_f^1)$ is impossible to place so as
 to intersect \HS$(f)$, \HS$(v_d^1)$ and $\HS(v_e^1)$, but no other \HS-shape. We deduce that $f\prec v_d^1$, and further that $e\prec v_f^1\prec f$, $v_e^2\prec v_f^2\prec d$ and $d\prec v_d^2\prec e$. The resulting order of the anchors for the vertices of $\Lambda^9$ is given in Figure~\ref{fig:Gamma2Rep}(b). 
 
 Looking for the neighbors  $v_i^1, v_i^2, v_j^1, v_j^2$ with $i\neq j$ and $i,j\in\{d,e,f\}$ of the vertex $10$, we notice that $e$ separates
 $v_d^1$ and $v_d^2$, as well as $v_f^1$ and $v_f^2$. None of these pairs of vertices can be adjacent to $10$ (since 10 is not adjacent to $e$), and thus there is no possible way
 to place \HS$(10)$ so as to intersect the vertices it needs to intersect. No \HS-representation for $\Lambda(x)$ can be found in this case.
\medskip

Consequently, all the grounded \HS-representations of $\Lambda(x)$ are obtained in Case 1, and are drawn in Figure~\ref{fig:Gamma2Rep}(c). 
\end{proof}

{\bf Convention.} In the remainder of the paper, the grounded \HS-representation in Figure \ref{fig:Gamma2Rep}(a) is
drawn as a grey box around \HS$(x)$, together with two lateral segments indicating the level of the lowest horizontal segment in the
representation. See Figure \ref{fig:Gamma2Rep}(d). 
\bigskip

Let $\Lambda(x)+u$ be the graph obtained by adding to $\Lambda(x)$ an universal vertex $u$.  

\begin{proposition}
The graph $\Lambda(x)+u$ is a grounded \LS-graph. In its grounded \HS-representations, $u$ is always the rightmost anchor,
and thus \HS$(u)$ is a \HS$^h$-shape.
\label{prop:universal}
\end{proposition}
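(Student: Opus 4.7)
The plan is to prove the proposition in two parts.

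For existence, I would take a grounded \HS-representation of $\Lambda(x)$ provided by Proposition~\ref{prop:GammaUnique}, rescale vertically so every horizontal segment lies above height $1$, and add $\HS(u)$ with anchor placed to the right of every anchor of $\Lambda(x)$, a short vertical of length $\epsilon < 1$, and a horizontal at height $\epsilon$ extending leftward past every anchor of $\Lambda(x)$. Each vertical segment of $\Lambda(x)$ rises from the ground through height $\epsilon$, so it meets $\mathsf{H}(u)$, yielding the ten required adjacencies; meanwhile $\mathsf{H}(u)$ lies strictly below every horizontal of $\Lambda(x)$, so no spurious intersection arises.

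For uniqueness, let $R'$ be any grounded \HS-representation of $\Lambda(x)+u$. Deleting $\HS(u)$ yields a grounded \HS-representation of $\Lambda(x)$, which by Proposition~\ref{prop:GammaUnique} has one of the two symmetric forms of Figure~\ref{fig:Gamma2Rep}(c); w.l.o.g.\ take the anchor order $10 \prec v_e^2 \prec v_f^2 \prec d \prec v_d^2 \prec e \prec v_f^1 \prec f \prec v_d^1 \prec v_e^1$. Writing $h(a)$ for the $y$-coordinate of $\mathsf{H}(a)$ and using the characterization $a \prec b \Leftrightarrow \mathsf{V}(a) \cap \mathsf{H}(b) \neq \emptyset$, the universality of $u$ with respect to $\Lambda(x)$ imposes $h(u) \leq h(y)$ for every $y \prec u$ and $h(z) \leq h(u)$ for every $z$ with $u \prec z$; in particular $h(z) \leq h(y)$ whenever $y \prec u \prec z$.

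Assume for contradiction that $u$ is not rightmost. The main obstacle is to rule out each of the ten non-rightmost slots for $u$'s anchor by case analysis. The key tool is a forced inequality in $\Lambda(x)$'s representation: for every non-adjacent pair $(y, z)$ with $y \prec z$ such that $\mathsf{H}(z)$ already covers $y$'s anchor, the non-adjacency forces $h(y) < h(z)$. The four neighbors $v_f^1, v_d^1, v_f^2, v_d^2$ of $10$ have horizontals reaching $10$'s anchor and thus produce a rich supply of such inequalities, and combined with $h(v_e^1) \leq h(e) \leq h(v_e^2)$ (from $v_e^1 \sim e \sim v_e^2$) they suffice. For each slot I would exhibit a witness pair: typically with $y$ to the left of $u$ and $z$ to the right (for instance $y=v_e^2$ and $z\in\{v_f^1,v_d^1\}$ when $v_e^2 \prec u \prec v_d^1$, or $y=v_d^1$ and $z=v_e^1$ when $u$ lies between $v_d^1$ and $v_e^1$), so that $h(z) \leq h(u) \leq h(y)$ directly contradicts the forced $h(y) < h(z)$. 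For the two slots with $u$ to the left of $v_e^2$, one uses instead that in $R'$ the segment $\mathsf{H}(v_e^1)$ extended leftward past $u$'s anchor now covers $v_e^2$ or $10$, and the resulting non-adjacency constraint contradicts $h(v_e^1) \leq h(v_e^2)$ or, in the leftmost case, $h(v_e^1) < h(10)$ --- the latter following in $\Lambda(x)$ from $h(v_e^1) \leq h(e) < h(v_f^1) \leq h(10)$.
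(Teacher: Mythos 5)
Your construction for the existence half is exactly the paper's: place the anchor of $u$ to the extreme right and let ${\mathsf H}(u)$ run leftward at a height below every horizontal segment of $\Lambda(x)$, so that it meets precisely the ten vertical segments. The uniqueness half also follows the paper's strategy --- freeze the unique representation of $\Lambda(x)$ given by Proposition~\ref{prop:GammaUnique} and play the height constraints coming from the universality of $u$ against those already forced inside $\Lambda(x)$. The paper merely compresses your ten-slot case analysis: it first shows that any non-rightmost position of $u$ is forced into the window $d\prec u\prec v_d^2$, and then uses the single pair $(v_f^2,v_f^1)$, i.e.\ $h(v_f^1)\leq h(u)\leq h(v_f^2)$ against the forced $h(v_f^2)<h(v_f^1)$.

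There is, however, one wrong witness. For the slot $v_d^1\prec u\prec v_e^1$ you invoke the pair $(v_d^1,v_e^1)$ and the ``forced'' inequality $h(v_d^1)<h(v_e^1)$. But $v_d^1$ and $v_e^1$ are \emph{adjacent} in $\Lambda(x)$: the vertices $7,8,9$ form a triangle (the paper's Case~2 analysis explicitly requires $v_d^1$ to intersect $v_e^1$, and $v_f^1$ to intersect both $v_d^1$ and $v_e^1$). So your non-adjacency tool does not apply to this pair; worse, the adjacency together with $v_d^1\prec v_e^1$ forces ${\mathsf V}(v_d^1)\cap{\mathsf H}(v_e^1)\neq\emptyset$, hence $h(v_e^1)\leq h(v_d^1)$ --- the opposite of what you claim --- and this is perfectly consistent with $h(v_e^1)\leq h(u)\leq h(v_d^1)$. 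The slot is still impossible, but you must pick a left witness that is a non-neighbour of $v_e^1$: take $y=f$. Since ${\mathsf H}(v_e^1)$ must reach ${\mathsf V}(e)$, it passes over the anchor of $f$ without meeting ${\mathsf V}(f)$, forcing $h(f)<h(v_e^1)$, while universality gives $h(v_e^1)\leq h(u)\leq h(f)$, a contradiction. With that repair (the remaining witnesses you name are indeed non-adjacent pairs whose horizontal segment does cover the left vertex's anchor, so they work), your argument goes through.
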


\begin{proof}
A grounded \HS-representation of $\Lambda(x)+u$ is obtained by placing the anchor of $u$ to the extreme right of a grounded \HS-representation
of $\Lambda(x)$,  and letting the horizontal
segment of \HS$(u)$ intersect the vertical segments of all the other \HS-shapes at a very low height (lower than the horizontal
segment of \HS$(v_d^2)$). Then \HS$(u)$ is a \HS$^h$-shape. The conclusion of Proposition \ref{prop:universal} is reached by noticing 
that no other position is possible for $u$. Indeed, if $u$ was to the left of $v_e^1$, then \HS$(u)$ could intersect both \HS$(v_d^2)$ and
\HS$(v_e^1)$ only if $d\prec u\prec v_d^2$. But then the horizontal segment $\mathsf{H}(u)$ of \HS$(u)$ should be both below
$\mathsf{H}(v_f^2)$ (in order to intersect \HS$(v_f^2)$) and above $\mathsf{H}(v_f^1)$ (in order to intersect \HS$(v_f^1)$) and this is not possible.
\end{proof}

\begin{proposition}
The graph $G$ is a yes-instance of {\sc StickRec} if and only if the graph $H$ is a yes-instance of {\sc Grounded \LS-Rec}.
\label{prop:yes}
\end{proposition}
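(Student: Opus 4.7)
For the direction ``\textsc{StickRec} yes-instance $\Rightarrow$ \textsc{Grounded \LS-Rec} yes-instance'', I would start from a nice grounded \HS-representation $R_G$ of $G$ supplied by Proposition \ref{prop:StickToL}, and pick $\epsilon>0$ smaller than half the minimum distance between any two anchors of $R_G$. For each $b \in B$, I would insert, in the interval $(b-\epsilon,b+\epsilon)$, the nine additional anchors of $\Lambda(b)$ together with their segments so as to reproduce the gadget representation of Figure \ref{fig:Gamma2Rep}(c) with vertex $e=2$ identified with $b$; the heights inside the gadget would be scaled so that every horizontal segment of $\Lambda(b)$ lies strictly above $\max_{a\sim b} h(a)$. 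Finally, for every edge $ab \in E$, I would slightly lengthen $\mathsf{H}(a)$ leftwards to reach just past the leftmost gadget anchor (vertex $10$ of $\Lambda(b)$). By the choice of $\epsilon$, each gadget interval avoids every other anchor of $R_G$, so $\mathsf{H}(a)$ (which is below every horizontal segment of $\Lambda(b)$) intersects exactly the ten vertical segments of $\Lambda(b)$ and no other shape, yielding a grounded \HS-representation of $H$.

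For the converse, fix any grounded \HS-representation $R_H$ of $H$. Proposition \ref{prop:GammaUnique} determines, up to the two mirror symmetries, how each $\Lambda(b)$ is drawn in $R_H$; in particular $b=2=e$ is the middle of the three anchors $\{d,e,f\}$. Proposition \ref{prop:universal}, applied to every $a \in A$ adjacent to $b$ in $G$ (which is universal with respect to $\Lambda(b)$ in $H$ by construction), forces the anchor of $a$ to lie strictly to the right of every anchor of $\Lambda(b)$ and forces \HS$(a)$ to be a \HS$^h$-shape. Restricting $R_H$ to $A \cup B$, with each $b \in B$ identified with the vertex labelled $2$ inside $\Lambda(b)$, then yields a grounded \HS-representation of $G$: every $a \in A$ remains a \HS$^h$-shape, and every $b \in B$ becomes a \HS$^v$-shape because its surviving neighbours are $A$-vertices all lying to the right of $b$. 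Thus the restriction is nice, and Proposition \ref{prop:LtoStick} finally converts it into a stick representation of $G$.

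The main obstacle is the forward direction, where one has to check that inserting the gadgets and lengthening the horizontal segments creates exactly the ten required intersections between each $a$ and each $\Lambda(b)$, and no spurious ones. This is handled by keeping the gadget intervals pairwise disjoint and disjoint from every other anchor of $R_G$, by scaling each gadget vertically so that all its horizontals lie above the horizontals of the $A$-vertices adjacent to it, and by extending every $\mathsf{H}(a)$ only as far as the nearest non-neighbour anchor allows. The backward direction, once Propositions \ref{prop:GammaUnique} and \ref{prop:universal} are in hand, is essentially a bookkeeping argument that preserves the niceness property under restriction.
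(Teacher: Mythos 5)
Your argument is essentially the paper's own proof: the forward direction inserts a grounded \HS-representation of each gadget $\Lambda(b)$ as a block around \HS$(b)$ with its horizontal segments raised above those of $b$'s neighbours and extends each $\mathsf{H}(a)$ to cross the gadget's ten vertical segments, and the backward direction uses Propositions \ref{prop:GammaUnique} and \ref{prop:universal} to force the restriction to $A\cup B$ to be a nice representation before invoking Proposition \ref{prop:LtoStick}. The only caveat (equally implicit in the paper's version) is that when $\mathsf{H}(a)$ must span the gadget interval of a non-neighbour $b''$ lying between a neighbour $b$ and $a$, you also need all of $\Lambda(b'')$'s segments to stay below $h(a)$ --- which is possible because $h(a)$ exceeds the top of $\mathsf{V}(b'')$ while every neighbour of $b''$ has its horizontal below that point --- so your closing remark about extending $\mathsf{H}(a)$ ``only as far as the nearest non-neighbour anchor allows'' is not quite the right fix, since such an extension is sometimes unavoidable.
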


\begin{proof}
 For the forward direction, Proposition \ref{prop:StickToL} implies that the stick graph $G$ has a grounded \HS-represen\-tation. 
We transform this representation of $G$ into a grounded \HS-representation of $H$ as follows. For each $b\in B$, build a grounded \HS-representation of $\Lambda(b)$ (whose existence is ensured by Proposition~\ref{prop:GammaUnique}) as a block, in the immediate neighborhood of \HS$(b)$. See Figure \ref{fig:GroundedH}. The lowest horizontal segment of a \HS-shape in this grounded \HS-representation of $\Lambda(b)$,
indicated in the figure by the two grey lateral segments of the grey box, must be placed above all the horizontal segments of \HS-shapes \HS$(a)$ with $a\in A$ and $ab\in E$. This is always possible, by increasing the lengths of the vertical segments as needed. Furthermore, for each $a\in A$ with $ab\in E$, extend the horizontal segment of \HS$(a)$ such that it intersects all the \HS-shapes representing vertices in $\Lambda(b)$. The resulting grounded \HS-representation is a grounded \HS-representation of $H$.

We now consider the backward direction. Let $a\in A, b\in B$ such that $ab\in E$. By the construction of $H$, $a$ is universal for 
$\Lambda(b)$. By Proposition \ref{prop:universal} for the subgraph $\Lambda(b)+a$ of $H$, we deduce that the anchor of \HS$(a)$ is placed outside the representation of $\Lambda(b)$, to the right, so that \HS$(a)$ is a \HS$^h$-shape. It follows that the intersection between \HS$(b)$ and \HS$(a)$ holds on ${\mathsf V}(b)$ and on ${\mathsf H}(a)$, for each pair $a\in A, b\in B$ such that $ab\in E$. 

Then, when we focus only on the vertices in $G$,  the resulting grounded \HS-representation of $G$ is nice. We deduce by Proposition \ref{prop:LtoStick} that $G$ is a stick graph.
\end{proof}

\begin{figure}[t]
\centering
\scalebox{0.77}{\input{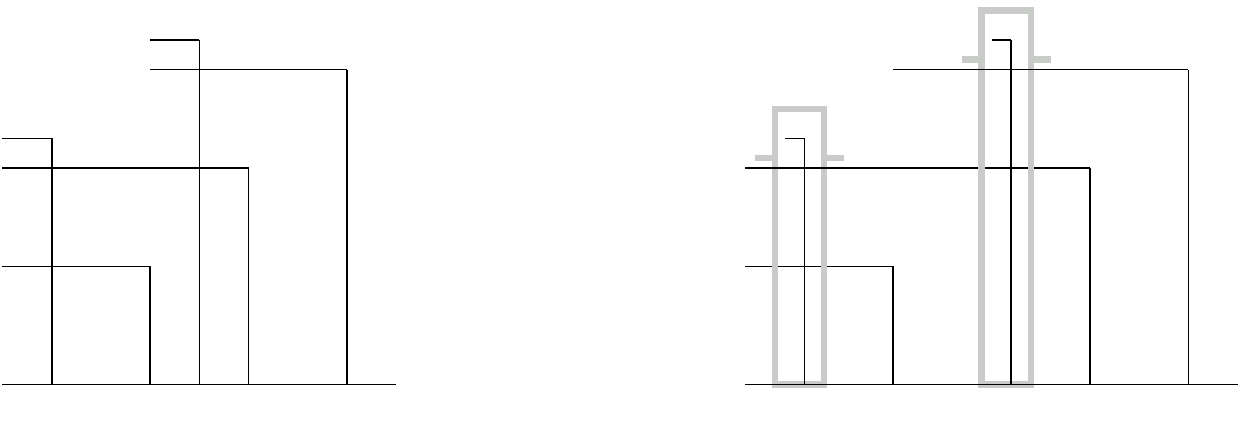_t}}
\caption{\small The grounded \HS-representation of $G$ (left) and the grounded \HS-representation of $H$ (right), when $G$ is the induced path $a_1b_1a_2b_2a_3$.}
\label{fig:GroundedH}
\end{figure}

We are ready to prove that {\sc Grounded \LS-Rec} is $\NP$-complete.

\begin{proof}
(of Theorem \ref{thm:grounded-L}).
To show that {\sc Grounded \LS-Rec} is in $\NP$, we first note that a graph has a grounded \HS-representation if and only if it has a grounded \HS-representation whose segments have integer coordinates. For the forward direction, consider a graph that has a grounded \HS-representation with real coordinates for the 
segments. Order the $2n$ endpoints of the horizontal segments in increasing order of their
$x$-coordinate (equal values may appear, that yield the same rank). Remember the rank of each point (called $x$-rank). Do the same for the $y$-coordinate, and the $y$-rank. Then  replace the initial coordinates of each endpoint with its $x$-rank and its $y$-rank, and draw the \HS-shapes accordingly. The backward direction is immediate.  

In order to show that {\sc Grounded \LS-Rec} is in $\NP$, it is 
sufficient to test if a set of $n$ given \HS-shapes with integer coordinates defines a grounded \HS-representation for the input graph, and
this is done in polynomial time.
Proposition \ref{prop:yes} shows that our reduction is correct. Moreover, it takes
linear time to build $H$ knowing $G$, so that the reduction is polynomial.
\end{proof}

\section{Proof of Theorem~\ref{thm:stabbable-GIG}}\label{sec:stabgigrep}

In this section, we focus on the following problem:

\Pb
{{\sc Stabbable Grid Intersection Graphs Recognition (StabGIGRec)}}
{A bipartite graph $G$.}
{Question}
{Is there a  grid intersection representation of $G$ in which all segments are stabbed by the same straight line?}

We perform a reduction from the following problem. A \emph{Hamiltonian path} in a graph is a path that visits each vertex of the graph exactly once.

\Pb
{{\sc Planar Hamiltonian Path Completion Problem (PHPC)}}
{A planar graph $G$.}
{Question}
{Is $G$ a subgraph of a planar graph with a Hamiltonian path?}


\subsection{The reduction}\label{subsec:reduction2}

{\bf Convention.} Throughout our proof, we will assume that $G$ is a connected graph with at least five vertices, as every graph on four vertices is planar and the planarity of a non-connected graph is reduced to that of its components. 
\bigskip

Let $k$ be a fixed positive odd number. Given a planar graph $P$ on $p\geq 5$ vertices, we construct a bipartite apex graph $\apexgraph{P}$ in $\poly(p)$ time as follows. 

Let $\subdivide{P}$ be the full $k$-subdivision of $P$, \ie, $\subdivide{P}$ is the graph obtained by replacing each edge of $P$ with an induced path of length $k+1$. Formally, we replace each $e=xy\in E(P)$ with the path $(x, u^1_e, u^2_e, \ldots, u^k_e, y)$ (see Figure~\ref{fig:7sub}).
\begin{align*}
V(\subdivide{P}) &= V(P) \cup \{u^1_e, u^2_e, \ldots, u^k_e\mid e\in E(P)\}; \\
E(\subdivide{P}) &= \{x u^1_e, u^1_e u^2_e, \ldots, u^k_e y \mid e=xy\in E(P)\}.
\end{align*}

\begin{figure}[t]
    \centering
    	\scalebox{1}{

\rotatebox{90}{\begin{tikzpicture}
				
				\draw (0,4) -- (0,12);
				
				\def\A{0,15}
				\def\B{0,16}
				
				\def\C{0,4}
				\def\D{0,5}
				\def\E{0,6}
				\def\F{0,7}
				\def\G{0,8}
				\def\H{0,9}
				\def\I{0,10}
				\def\J{0,11}
				\def\K{0,12}

				\renewcommand{\vertexset}{(a,\A,blue!60!white,,,blue!60!white),(b,\B,blue!60!white,,,blue!60!white),(c,\C,blue!60!white,,,blue!60!white),(k,\K,blue!60!white,,,blue!60!white),(d,\D,magenta,,,magenta),(e,\E,magenta,,,magenta),(f,\F,magenta,,,magenta),(g,\G,magenta,,,magenta),(h,\H,magenta,,,magenta),(i,\I,magenta,,,magenta),(j,\J,magenta,,,magenta)}
				\renewcommand{\edgeset}{(a,b,,,0)}

				\renewcommand{\defradius}{0.07}
				
				\drawgraph
				
				
				\node[left] at (0,16) {\rotatebox{270} {\large $x$}};
				\node[left] at (0,15) {\rotatebox{270} {\large $y$}};
				
				\node[left] at (0,12) {\rotatebox{270} {\large $x$}};
				\node[left] at (0,11) {\rotatebox{270} {\large $u_e^1$}};
				\node[left] at (0,10) {\rotatebox{270} {\large $u_e^2$}};
				\node[left] at (0,9) {\rotatebox{270} {\large $u_e^3$}};
				\node[left] at (0,8) {\rotatebox{270} {\large $u_e^4$}};
				\node[left] at (0,7) {\rotatebox{270} {\large $u_e^5$}};
				\node[left] at (0,6) {\rotatebox{270} {\large $u_e^6$}};
				\node[left] at (0,5) {\rotatebox{270} {\large $u_e^7$}};
				\node[left] at (0,4) {\rotatebox{270} {\large $y$}};

			\end{tikzpicture}}}
    \caption{An edge $e=xy$ (left) and a 7-subdivision of $e$ (right).}
    \label{fig:7sub}
\end{figure}
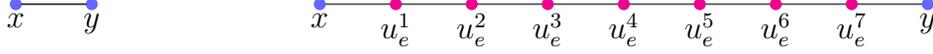

We call the vertices of $V(P)\subseteq V(\subdivide{P})$ as the \emph{original vertices} of $\subdivide{P}$ and the remaining vertices as the \emph{subdivision vertices} of $\subdivide{P}$.  Finally, we construct $\apexgraph{P}$ by adding a new vertex $a$ (called the `apex') to $\subdivide{P}$ and making it adjacent to all the original vertices of $\subdivide{P}$. Formally, $\apexgraph{P}$ is defined as follows.
\begin{align*}
V(\apexgraph{P}) &= V(\subdivide{P}) \cup \{a\}; \\
E(\apexgraph{P}) &= E(\subdivide{P}) \cup \{a v \mid v\in V(P)\}.
\end{align*}





The reduction from {\sc PHPC} to {\sc StabGIGRec} associates to the planar graph $G$, the input of {\sc PHPC}, the graph $\apexgraph{G}$ built as above, with $P=G$ and an arbitrary odd integer $k$ such that $k\geq 7$. 
See Figure~\ref{fig:representation} for an example of the reduction.

We need to prove that a planar graph $G$ is a yes-instance of {\sc PHPC} if and only if $\apexgraph{G}$ is a yes-instance of {\sc StabGIGRec}. The forward direction is Proposition~\ref{prop:forward} of the next section. For the backward direction, we will need the following result. A graph is a {\em 1-string graph} if it is the intersection graph of simple curves in the plane, also called {\em strings}, such that any two strings intersect at most once, and whenever they intersect they cross each other. 

\begin{proposition}[\cite{chakraborty2022}]\label{prop:yes-instance} 
For a planar graph $G$, if $k$ is odd and $\apexgraph{G}$ is a 1-string graph, then $G$ is a yes-instance of {\sc PHPC}.
\end{proposition}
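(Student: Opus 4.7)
The plan is to extract a candidate Hamiltonian path from a \OneString{} representation of $\apexgraph{G}$ and then reinterpret that representation as a planar drawing of $G$ together with the path. The apex string $s_a$ will play the role of a ``spine'' whose traversal induces a linear order on the original vertices, and each subdivided edge of $G$ will collapse to an arc joining two spine points, yielding a planar supergraph of $G$ with a Hamiltonian path.

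\textbf{Key non-adjacencies.} The first step is to record the following non-adjacencies in $\apexgraph{G}$, which hold because $k$ is odd (so $\apexgraph{G}$ is bipartite with the original vertices and the apex $a$ on one side and all subdivision vertices on the other): (i) any two original vertices are non-adjacent; (ii) any two subdivision vertices of two distinct edges of $G$ are non-adjacent; (iii) the apex $a$ is non-adjacent to every subdivision vertex; (iv) each vertex $u^i_e$ of an edge $e=xy$ has no neighbours outside $\{u^{i-1}_e, u^{i+1}_e, x, y\}$, with the convention $u^0_e = x$ and $u^{k+1}_e = y$. Because a \OneString{} representation realizes non-adjacency by disjointness of curves, these properties force $s_a$ to be crossed exactly once by each original vertex string and by no subdivision vertex string, and they imply that the strings $s_{u^1_e}, \ldots, s_{u^k_e}$ of any edge $e$ form a connected ``pipe'' from $s_x$ to $s_y$ that is disjoint from $s_a$, from every $s_{v_j}$ with $v_j \notin \{x,y\}$, and from every pipe of a different edge.

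\textbf{Building the path and the planar drawing.} I would then label the original vertices $v_{\pi(1)}, \ldots, v_{\pi(p)}$ in the order in which $s_a$ meets them, take $Q$ to be the Hamiltonian path with edges $v_{\pi(i)} v_{\pi(i+1)}$, and turn the representation into a plane drawing of $G \cup Q$ as follows: contract each $s_{v_i}$ to its crossing point $p_i$ with $s_a$, replace each pipe by a simple arc drawn inside a small tubular neighbourhood of the pipe, and use the portions of $s_a$ between $p_{\pi(i)}$ and $p_{\pi(i+1)}$ as the edges of $Q$. Non-adjacency (ii) guarantees that arcs coming from different pipes do not cross each other (they meet only at a common endpoint $p_x$ when the two edges of $G$ share the vertex $x$); non-adjacency (iii) guarantees that pipe arcs do not cross $s_a$; and the $s_a$-pieces trivially do not cross each other. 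The outcome is a planar drawing of $G \cup Q$, witnessing that $G$ is a yes-instance of {\sc PHPC}.

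\textbf{Main obstacle.} The delicate step is the tubular-neighbourhood / contraction part: when two pipes share an original endpoint $x$, one must ensure that the two resulting arcs can be drawn so that they leave $p_x$ on the correct sides of the spine and with a cyclic order inherited from the \OneString{} representation, so that no new crossing is introduced at $p_x$. The long odd subdivision is the reason this works, providing enough ``slack'' within each pipe to re-route it inside a narrow neighbourhood of its constituent strings; once this topological clean-up is made rigorous, the planarity of $G \cup Q$, and hence the conclusion that $G$ is a yes-instance of {\sc PHPC}, reduces to bookkeeping the non-adjacencies above.
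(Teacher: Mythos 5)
First, an important caveat: the paper does not prove Proposition~\ref{prop:yes-instance} at all --- it is imported as a black box from \cite{chakraborty2022} and invoked only in Section~\ref{subsec:proofthm}. So there is no in-paper proof to measure you against, and I can only compare your argument with the cited source, whose mechanism it matches in all essentials: in a 1-string representation the apex string $s_a$ is crossed exactly once by each original-vertex string and by no other string, so one reads off a linear order of $V(G)$ along $s_a$, takes $Q$ to be the corresponding Hamiltonian path drawn on sub-arcs of $s_a$, and collapses each subdivision ``pipe'' (together with the relevant sub-arcs of its two endpoint strings) into a single edge-arc. Your accounting of disjointness (pipes of distinct edges, pipes versus $s_a$, pipes versus non-endpoint original strings) is exactly what makes the collapsed drawing of $G\cup Q$ planar. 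The step you flag as delicate --- rerouting the bundle of edge-arcs that leave a common contracted point $p_x$ along $s_x$ so that each peels off into its own pipe without crossing the others or $s_a$ --- is indeed where the real work sits. It does go through: the only strings crossing $s_x$ are $s_a$ (once, at $p_x$) and one subdivision string per incident edge (once each), so between consecutive peel-off points the surviving bundle can switch sides of $s_x$ freely and in parallel; but a complete write-up must spell this out rather than gesture at a tubular neighbourhood.

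Two small corrections. Your parenthetical justification of the non-adjacencies is off: the bipartition of $\apexgraph{G}$ is not ``originals and apex versus all subdivision vertices'' (consecutive subdivision vertices are adjacent, so they cannot share a side); the correct classes put the originals with the even-indexed subdivision vertices and the apex with the odd-indexed ones. More to the point, facts (i)--(iv) follow from the construction for every $k\geq 1$ and have nothing to do with parity; the oddness of $k$ is needed for bipartiteness and for the converse construction, not for anything in this direction, so you should not present it as the source of those non-adjacencies.
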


\subsection{The proofs}\label{subsec:proofs2}

We start by the forward direction:

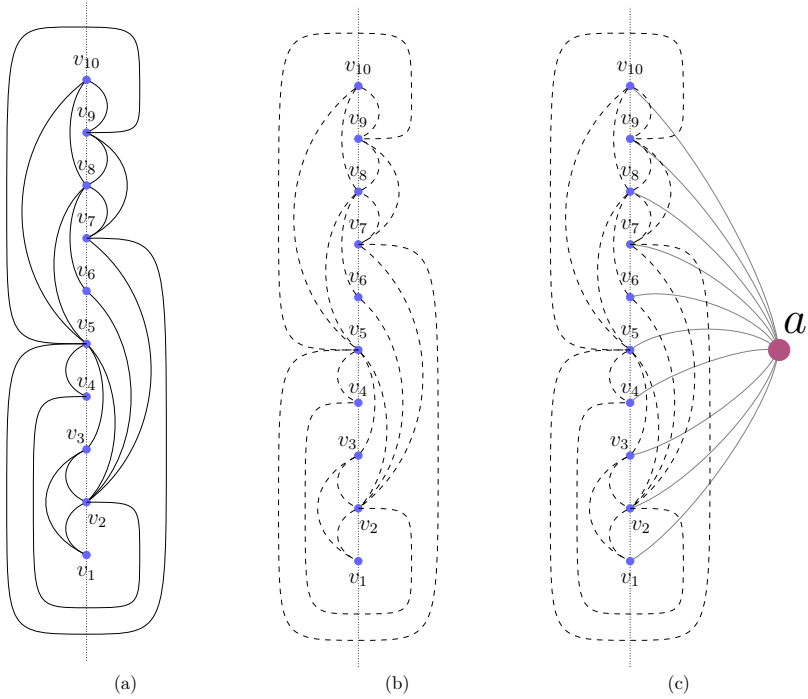
\begin{figure}[t]
	\centering
	\scalebox{0.7}{
		\begin{tabular}{ccc}
			\rotatebox{90}{
				\begin{tikzpicture}
				
				\draw[densely dotted] (-2,0) -- (10.5,0);
				\def\A{4,0}
				\def\B{3,0}
				\def\C{2,0}
				\def\D{1,0}
				\def\E{0,0}
				\def\F{5,0}
				\def\G{6,0}
				\def\H{7,0}
				\def\J{4,-3}
				\def\K{8,0}
				\def\L{9,0}
				\renewcommand{\vertexset}{(a,\A,blue!60!white,,,blue!60!white),(b,\B,blue!60!white,,,blue!60!white),(c,\C,blue!60!white,,,blue!60!white),(d,\D,blue!60!white,,,blue!60!white),(e,\E,blue!60!white,,,blue!60!white),(f,\F,blue!60!white,,,blue!60!white),(g,\G,blue!60!white,,,blue!60!white),(h,\H,blue!60!white,,,blue!60!white),(j,\J,white,,,white),(k,\K,blue!60!white,,,blue!60!white),(l,\L,blue!60!white,,,blue!60!white)}
				
				\renewcommand{\edgeset}{(a,b,,,-0.5),(c,d,,,-0.5),(d,e,,,-0.5),(c,e,,,-1),(a,c,,,0.4),(a,d,,,0.7),(f,d,,,1.1),(g,d,,,1.6),(h,f,,,-0.4),(h,a,,,-0.8),(l,h,,,-0.4),(k,g,,,1),(l,a,,,-1.6),(l,k,,,0.5),(h,k,,,-0.5),(g,h,,,-0.5)}
				
				\renewcommand{\defradius}{0.07}
				
				\drawgraph
				
				\draw (1,0) .. controls (1,-1) .. (0,-1) .. controls (-1,-1) .. (-1,0) .. controls (-1,1) .. (1,1) .. controls (3,1) .. (3,0);
				
				\draw (6,0) .. controls (6,-1.5) .. (2,-1.5) .. controls (-1.5,-1.5) .. (-1.5,0) .. controls (-1.5,1.5) .. (1,1.5) .. controls (4,1.5) .. (4,0);
				
				\draw (4,0) .. controls (4,1.5) .. (6.5,1.5) .. controls (10,1.5) .. (10,0.25) .. controls (10,-1) .. (8.75,-1) .. controls (8,-1) .. (8,0);
				
				\node[left] at (-0.1,0) {\rotatebox{270} {\large $v_1$}};
				\node[left] at (0.9,-0.2) {\rotatebox{270} {\large $v_2$}};
				\node[right] at (2,0.2) {\rotatebox{270} {\large $v_3$}};
				\node[right] at (3,0) {\rotatebox{270} {\large $v_4$}};
				\node[right] at (4.1,0) {\rotatebox{270} {\large $v_5$}};
				\node[right] at (5.1,0) {\rotatebox{270} {\large $v_6$}};
				\node[right] at (6.1,0) {\rotatebox{270} {\large $v_7$}};
				\node[right] at (7.1,0) {\rotatebox{270} {\large $v_8$}};
				\node[right] at (8.1,0) {\rotatebox{270} {\large $v_9$}};
				\node[right] at (9.1,0) {\rotatebox{270} {\large $v_{10}$}};
				\end{tikzpicture}} & 
    \rotatebox{90}{\begin{tikzpicture}
				
				\draw[densely dotted] (-2,0) -- (10.5,0);
				\def\A{4,0}
				\def\B{3,0}
				\def\C{2,0}
				\def\D{1,0}
				\def\E{0,0}
				\def\F{5,0}
				\def\G{6,0}
				\def\H{7,0}
				\def\J{4,-3}
				\def\K{8,0}
				\def\L{9,0}
				\renewcommand{\vertexset}{(a,\A,blue!60!white,,,blue!60!white),(b,\B,blue!60!white,,,blue!60!white),(c,\C,blue!60!white,,,blue!60!white),(d,\D,blue!60!white,,,blue!60!white),(e,\E,blue!60!white,,,blue!60!white),(f,\F,blue!60!white,,,blue!60!white),(g,\G,blue!60!white,,,blue!60!white),(h,\H,blue!60!white,,,blue!60!white),(j,\J,white,,,white),(k,\K,blue!60!white,,,blue!60!white),(l,\L,blue!60!white,,,blue!60!white)}
				
				\renewcommand{\edgeset}{(a,b,,,-0.5,dashed),(c,d,,,-0.5,dashed),(d,e,,,-0.5,dashed),(c,e,,,-1,dashed),(a,c,,,0.4,dashed),(a,d,,,0.7,dashed),(f,d,,,1.1,dashed),(g,d,,,1.6,dashed),(h,f,,,-0.4,dashed),(h,a,,,-0.8,dashed),(l,h,,,-0.4,dashed),(k,g,,,1,dashed),(l,a,,,-1.6,dashed),(l,k,,,0.5,dashed),(h,k,,,-0.5,dashed),(g,h,,,-0.5,dashed)}
				
				\renewcommand{\defradius}{0.07}

				\drawgraph
				
				\draw[dashed] (1,0) .. controls (1,-1) .. (0,-1) .. controls (-1,-1) .. (-1,0) .. controls (-1,1) .. (1,1) .. controls (3,1) .. (3,0);
				
				\draw[dashed] (6,0) .. controls (6,-1.5) .. (2,-1.5) .. controls (-1.5,-1.5) .. (-1.5,0) .. controls (-1.5,1.5) .. (1,1.5) .. controls (4,1.5) .. (4,0);
				
				\draw[dashed] (4,0) .. controls (4,1.5) .. (6.5,1.5) .. controls (10,1.5) .. (10,0.25) .. controls (10,-1) .. (8.75,-1) .. controls (8,-1) .. (8,0);
				
				\node[left] at (-0.1,0) {\rotatebox{270} {\large $v_1$}};
				\node[left] at (0.9,-0.2) {\rotatebox{270} {\large $v_2$}};
				\node[right] at (2,0.2) {\rotatebox{270} {\large $v_3$}};
				\node[right] at (3,0) {\rotatebox{270} {\large $v_4$}};
				\node[right] at (4.1,0) {\rotatebox{270} {\large $v_5$}};
				\node[right] at (5.1,0) {\rotatebox{270} {\large $v_6$}};
				\node[right] at (6.1,0) {\rotatebox{270} {\large $v_7$}};
				\node[right] at (7.1,0) {\rotatebox{270} {\large $v_8$}};
				\node[right] at (8.1,0) {\rotatebox{270} {\large $v_9$}};
				\node[right] at (9.1,0) {\rotatebox{270} {\large $v_{10}$}};

				\end{tikzpicture}}

                    & 
                    
                    \rotatebox{90}{\begin{tikzpicture}
				
				\draw[densely dotted] (-2,0) -- (10.5,0);
				\def\A{4,0}
				\def\B{3,0}
				\def\C{2,0}
				\def\D{1,0}
				\def\E{0,0}
				\def\F{5,0}
				\def\G{6,0}
				\def\H{7,0}
				\def\K{8,0}
				\def\L{9,0}
                \def\AP{4,-2.8}
				\renewcommand{\vertexset}{(ap,\AP,magenta!70!black,0.2,,magenta!70!black),(a,\A,blue!60!white,,,blue!60!white),(b,\B,blue!60!white,,,blue!60!white),(c,\C,blue!60!white,,,blue!60!white),(d,\D,blue!60!white,,,blue!60!white),(e,\E,blue!60!white,,,blue!60!white),(f,\F,blue!60!white,,,blue!60!white),(g,\G,blue!60!white,,,blue!60!white),(h,\H,blue!60!white,,,blue!60!white),(k,\K,blue!60!white,,,blue!60!white),(l,\L,blue!60!white,,,blue!60!white)}
				
				\renewcommand{\edgeset}{(ap,l,gray,,-0.5),(ap,k,gray,,-0.5),(ap,h,gray,,-0.5),(ap,g,gray,,-0.5),(ap,f,gray,,-0.5),(ap,a,gray,,-0.5),(ap,b,gray,,-0.3),(ap,c,gray,,0.35),(ap,d,gray,,0.5),(ap,e,gray,,0.5),(a,b,,,-0.5,dashed),(c,d,,,-0.5,dashed),(d,e,,,-0.5,dashed),(c,e,,,-1,dashed),(a,c,,,0.4,dashed),(a,d,,,0.7,dashed),(f,d,,,1.1,dashed),(g,d,,,1.6,dashed),(h,f,,,-0.4,dashed),(h,a,,,-0.8,dashed),(l,h,,,-0.4,dashed),(k,g,,,1,dashed),(l,a,,,-1.6,dashed),(l,k,,,0.5,dashed),(h,k,,,-0.5,dashed),(g,h,,,-0.5,dashed)}
				
				\renewcommand{\defradius}{0.07}

				\drawgraph
				
				\draw[dashed] (1,0) .. controls (1,-1) .. (0,-1) .. controls (-1,-1) .. (-1,0) .. controls (-1,1) .. (1,1) .. controls (3,1) .. (3,0);
				
				\draw[dashed] (6,0) .. controls (6,-1.5) .. (2,-1.5) .. controls (-1.5,-1.5) .. (-1.5,0) .. controls (-1.5,1.5) .. (1,1.5) .. controls (4,1.5) .. (4,0);
				
				\draw[dashed] (4,0) .. controls (4,1.5) .. (6.5,1.5) .. controls (10,1.5) .. (10,0.25) .. controls (10,-1) .. (8.75,-1) .. controls (8,-1) .. (8,0);
				
				\node[left] at (-0.1,0) {\rotatebox{270} {\large $v_1$}};
				\node[left] at (0.9,-0.2) {\rotatebox{270} {\large $v_2$}};
				\node[right] at (2,0.2) {\rotatebox{270} {\large $v_3$}};
				\node[right] at (3,0) {\rotatebox{270} {\large $v_4$}};
				\node[right] at (4.1,0) {\rotatebox{270} {\large $v_5$}};
				\node[right] at (5.1,0) {\rotatebox{270} {\large $v_6$}};
				\node[right] at (6.1,0) {\rotatebox{270} {\large $v_7$}};
				\node[right] at (7.1,0) {\rotatebox{270} {\large $v_8$}};
				\node[right] at (8.1,0) {\rotatebox{270} {\large $v_9$}};
				\node[right] at (9.1,0) {\rotatebox{270} {\large $v_{10}$}};
				
				\node[right] at (4.2,-3.1) {\rotatebox{270} {\Huge $a$}};

				\end{tikzpicture}}\\
			(a) & (b) & (c)
   
	\end{tabular}}
	\caption{
 (a) An example graph $G$ which is a yes-instance of {\sc PHPC}. (b) Each edge of the graph is replaced with a path with $k+1$ edges. (A dashed curve indicates a path of length $k+1$.) (c) The graph $\apexgraph{G}$, constructed by adding one extra vertex (the apex vertex $a$) in $\subdivide{G}$. }\label{fig:representation}
\end{figure}

\begin{proposition}\label{prop:forward}
If $G$ is a yes-instance of {\sc PHPC}, then $\apexgraph{G}$ is a $\StabGIG$. 
\end{proposition}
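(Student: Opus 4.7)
The plan is to translate the Hamiltonian certificate for $G$ into an explicit stabbable grid intersection representation of $\apexgraph{G}$. By hypothesis there is a planar supergraph $G'$ of $G$ admitting a Hamiltonian path $\pi = v_1 v_2 \cdots v_n$; I fix a planar embedding of $G'$ in which $\pi$ is drawn along the horizontal segment from $(1,0)$ to $(n,0)$ with $v_i$ at $(i,0)$, and in which every non-$\pi$ edge is realised as a simple arc lying strictly in one of the two open half-planes, with arcs on the same side pairwise non-crossing. This labels every $e \in E(G) \subseteq E(G')$ as \emph{upper} or \emph{lower}, and the upper (resp.\ lower) edges form a non-crossing laminar chord family of $\pi$.

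Choose a stabbing line $\ell$ of non-zero, non-infinite slope (concretely, $\ell : y = -x + (n+1)$). Represent each $v_i$ as a vertical segment at $x = i$ whose $y$-range is designed to contain exactly the stab point $(i, n+1-i)$, the apex meeting point $(i, 0)$, and the heights of the \emph{endpoint} subdivision horizontals that should meet $v_i$, but to stop below any ``plateau'' height introduced below. Represent the apex $a$ as a horizontal segment along $y = 0$ with $x$-range $[-\epsilon, n+1+\epsilon]$, so that $a$ is stabbed at $(n+1, 0)$ and meets each $v_i$ at $(i, 0)$, realising all apex--original edges and no others. For each edge $e = v_i v_j \in E(G)$ with $i < j$, route the subdivision path $v_i, u^1_e, \ldots, u^k_e, v_j$ entirely in the upper (resp.\ lower) half-plane according as $e$ is upper (resp.\ lower). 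Since $k$ is odd and the endpoints $v_i, v_j$ are vertical, the subdivision starts and ends with horizontals, alternating $(k+1)/2$ horizontals with $(k-1)/2$ verticals. Exploiting $k \geq 7$, realise the subdivision as a \emph{climb--plateau--descent} zigzag: the endpoint horizontals $u^1_e$ and $u^k_e$ are short segments just above $v_i$ and $v_j$, confined to narrow $x$-windows around $i$ and $j$; two intermediate verticals climb to a plateau height $y_e$ chosen strictly above every cap of every $v_m$; and a plateau horizontal at height $y_e$ traverses from the $v_i$-window to the $v_j$-window, with its stab point $(n+1-y_e, y_e) \in \ell$ lying to the left of the original-vertex abscissae (which is where the negative slope of $\ell$ helps). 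Each segment is sized so that it contains its stab point on $\ell$.

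The principal difficulty is coordinating \emph{many} subdivisions on the same side without conflict, and this is precisely where planarity and $k \geq 7$ both enter. The laminar structure of same-side edges lets me process them innermost-first and assign strictly larger plateau heights $y_e$ to outer edges; distinct plateau heights ensure that plateau horizontals are pairwise disjoint (same orientation, distinct $y$) and that the vertical climbs of different subdivisions each stop at their own plateau without being clipped by another. All plateau heights exceed every $v_m$'s cap, so plateau horizontals avoid every non-incident original vertex, while the narrow endpoint horizontals $u^1_e, u^k_e$ avoid all originals other than $v_i$ and $v_j$ by being confined to unit-width $x$-windows. The hypothesis $k \geq 7$ supplies exactly the four horizontals and three verticals required per subdivision to execute the climb--plateau--descent pattern while every segment meets $\ell$. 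Once these coordinates are exhibited, a routine geometric check confirms that consecutive subdivision segments intersect exactly once, no unintended intersections occur (between different subdivisions or between a subdivision and the frame of originals plus apex), same-orientation segments are pairwise disjoint, and $\ell$ stabs every segment, yielding the desired StabGIG representation of $\apexgraph{G}$.
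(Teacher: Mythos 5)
Your construction founders at its very first step. From the existence of a planar supergraph $G'$ with a Hamiltonian path $\pi=v_1\cdots v_n$ you infer a drawing in which every non-$\pi$ edge lies strictly in one open half-plane, i.e.\ a two-page book embedding of $G$ with spine order $\pi$. That inference is invalid: in a planar drawing with $\pi$ on a horizontal segment, an edge may legitimately loop around $v_1$ or $v_n$ and cross the horizontal \emph{line} outside the segment, and forbidding this amounts to requiring $G'+v_1v_n$ to remain planar, i.e.\ to $G$ being subhamiltonian --- a strictly stronger condition than being a yes-instance of {\sc PHPC} (a maximal planar graph with a Hamiltonian path but no Hamiltonian cycle, such as the Goldner--Harary graph, is a yes-instance that admits no two-page book embedding). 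This is precisely what the Auer--Glei{\ss}ner layout used in the paper accounts for: besides $\Eleftedge(G)$ and $\Erightedge(G)$ there is a third class $\Ecrossedge(G)$ of edges that cross the spine line below the bottommost or above the topmost vertex, and the paper must first subdivide each such edge once at its crossing point (forming the intermediate graph $H$) before any geometry can be done. Your proof has no mechanism for these edges, and the reduction needs the forward implication for \emph{every} yes-instance.

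Second, even on a single page, the climb--plateau--descent routing is incompatible with the stabbing requirement. A plateau horizontal at height $y_e$ must contain the point $(n+1-y_e,\,y_e)$ of $\ell$; since $y_e$ exceeds the cap of every $\curve{v_m}$ (so $y_e>n$), that point lies to the left of all vertex abscissae, and the plateau therefore spans the $x$-positions of the climbing verticals of every same-side edge nested with $e$. Whichever way you order the plateau heights of two nested edges, the climb of one of them starts below all plateaus (just above a vertex cap) and ends above the other's plateau, so it crosses that plateau at an abscissa inside its span: a forbidden intersection between internal vertices of distinct subdivision paths. The paper avoids this by never sending any subdivision segment ``up and over'': each path is folded into a $\U$-shape whose corner $(\lefty_e,\lefty_e)$ sits \emph{on} the stab line $y=x$, so no segment needs to travel to reach $\ell$; nesting is controlled by the offsets $\varepsilon\level{I}$ indexed by the depth of the corresponding interval; and the extra vertices needed for $k\ge 7$ are realised as a microscopic staircase hugging the stab line inside the $\U$ rather than as tall plateaus.
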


The proof of this result is presented in Sections~\ref{subsec:overview} to \ref{subsec:Gapexrepres}. Section~\ref{subsec:proofthm} contains the proof of
Theorem~\ref{thm:stabbable-GIG}.



We use the notation $(x,y)$, where $x$ and $y$ are real values, for the Cartesian coordinates of a point, and the notation $[(x_1,y_1),(x_2,y_2)]$ for the segment whose endpoints are  $(x_1,y_1)$ and $(x_2,y_2)$.

\subsubsection{Overview of the construction}\label{subsec:overview}
In view of proving Proposition \ref{prop:forward}, let $G$ be a yes-instance of PHPC, and let $V(G)=\{v_1,v_2,\ldots,v_n\}$. We start with a plane drawing of $G$, \ie, a drawing of $G$ in the plane, in which the vertices are points and the edges are strings (that is, simple curves). Then we modify the drawing in a step-by-step manner to end with a $\StabGIG$ representation of $\apexgraph{G}$. Let us now provide an overview of these steps.

\begin{enumerate}[label=(\alph*)]

    \item \label{enum:step1auergleisner} Consider a drawing of $G$ in the plane in which the vertex $v_i$ (for every $i\in[n]$) has the Cartesian coordinate $(0,i-1)$, each edge of $G$ crosses the y-axis (the line $x=0)$ at most once, and no edge of $G$ crosses the y-axis between $y=0$ and $y=n-1$. (Auer \& Glei{\ss}ner~\cite{auer2011} show that such a drawing exists for every graph that is a yes-instance of {\sc PHPC}.) See Figure~\ref{fig:representation}(a) for an illustration, where $v_1$ is located at $(0,0)$ and $v_{10}$ is located at $(0,9)$.
    
    \item \label{enum:step2creationofH} In this drawing, though no edge of $G$ crosses the y-axis between the points $(0,0)$ and $(0,n-1))$, there might be edges of $G$ that cross the y-axis below $y=0$ or above $y=n-1$. We place new vertices on the intersection points of those edges with the y-axis, to create a new graph $H$. Let $n_H$ be the number of vertices in $H$. See Figure~\ref{fig:proof-illustration} for an illustration, where there are three such vertices, indicated by square nodes.
    
    \item \label{enum:step3propertiesofH} The graph $H$ has some nice properties. Firstly, $n\leq n_H\leq 4n$. Secondly, $H$ is a subdivision of $G$. As subdivision does not affect planarity, $H$ is also planar. Thirdly, the drawing of $H$ ensures that no edge of $H$ crosses the y-axis. We fix the line $y=x$ as our stab line $\ell$ (recall that $\ell$ does need not to be axis-parallel), and use the current plane drawing of $H$ to devise a $\StabGIG$ representation of the graph $\apexgraphthree{H}$. (The graph $\apexgraphthree{H}$ is obtained as described in the previous section, by considering $P=H$ and $k=3$. See Figure~\ref{fig:proof-illustration}.) Note that the graph $\apexgraph{G}$ is obtained from $\apexgraphthree{H}$ by further subdividing some of its edges. 
    
    \item \label{enum:step4representationofH} In the $\StabGIG$ representation of $\apexgraphthree{H}$, the apex segment of $\apexgraphthree{H}$ is represented by a long vertical segment coinciding with the y-axis. The $n_H$ original vertices of $\apexgraphthree{H}$ are represented by $n_H$ horizontal segments, each one crossing the y-axis at the point defined by its corresponding vertex in the plane representation of $H$. Each edge $e=xy$ of $H$ (which is a string in the plane drawing of $H$) becomes a path with three intermediate vertices $u_e^1, u_e^2, u_e^3$ in $\apexgraphthree{H}$. These three vertices are represented as three segments that form a $\U$-shape or a $\Urev$-shape, whose two vertical segments appropriately intersect the horizontal segments representing respectively $x$ and $y$. See Figure~\ref{fig:FigEdgeRep} for an illustration (formal details are given below). 

    \item For each edge $e$ of $H$, we need to very precisely and carefully describe the positions of the three segments corresponding to its three subdivision vertices (which will form a $\U$-shape or a $\Urev$-shape) $u^1_e, u^2_e, u^3_e$ in $\apexgraphthree{H}$, in such a way that all the three segments intersect the stab line $\ell$. This is the most non-trivial part of our proof. See Figure~\ref{fig:FigEdgeRep} for an illustration. \label{enum:step5nontrivial} 
    
    \item \label{enum:step6sevensubdiv} Lastly, we need to convert $\apexgraphthree{H}$ to $\apexgraph{G}$. Note that $\apexgraphthree{H}\setminus\{a\}$ is a subdivision of $G$ in which each edge of $G$ is subdivided either 3 or 7 times. 
    An overview of this construction is illustrated by Figure~\ref{fig:LastSubdivision}. Finally, with the apex segment $[(0,0),(0,n-1)]$, we have a $\StabGIG$ representation of $\apexgraph{G}$.
    
\end{enumerate}

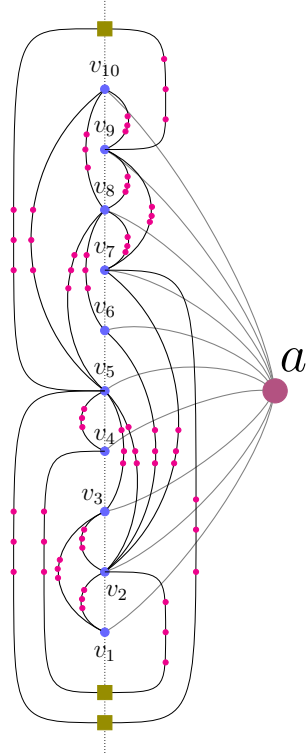
\begin{figure}[t]
    \centering
    	\scalebox{0.8}{

\rotatebox{90}{\begin{tikzpicture}
				
				\draw[densely dotted] (-2,0) -- (10.5,0);
				\def\A{4,0}
				\def\B{3,0}
				\def\C{2,0}
				\def\D{1,0}
				\def\E{0,0}
				\def\F{5,0}
				\def\G{6,0}
				\def\H{7,0}
				\def\K{8,0}
				\def\L{9,0}
                    
                \def\AP{4,-2.8}
                \renewcommand{\vertexset}{(ap,\AP,magenta!70!black,0.2,,magenta!70!black),(a,\A,blue!60!white,,,blue!60!white),(b,\B,blue!60!white,,,blue!60!white),(c,\C,blue!60!white,,,blue!60!white),(d,\D,blue!60!white,,,blue!60!white),(e,\E,blue!60!white,,,blue!60!white),(f,\F,blue!60!white,,,blue!60!white),(g,\G,blue!60!white,,,blue!60!white),(h,\H,blue!60!white,,,blue!60!white),(k,\K,blue!60!white,,,blue!60!white),(l,\L,blue!60!white,,,blue!60!white)}

                \renewcommand{\edgeset}{(ap,l,gray,,-0.5),(ap,k,gray,,-0.5),(ap,h,gray,,-0.5),(ap,g,gray,,-0.5),(ap,f,gray,,-0.5),(ap,a,gray,,-0.5),(ap,b,gray,,-0.3),(ap,c,gray,,0.35),(ap,d,gray,,0.5),(ap,e,gray,,0.5),(a,b,,,-0.5),(c,d,,,-0.5),(d,e,,,-0.5),(c,e,,,-1),(a,c,,,0.4),(a,d,,,0.7),(f,d,,,1.1),(g,d,,,1.6),(h,f,,,-0.4),(h,a,,,-0.8),(l,h,,,-0.4),(k,g,,,1),(l,a,,,-1.6),(l,k,,,0.5),(h,k,,,-0.5),(g,h,,,-0.5)}
				
				\renewcommand{\defradius}{0.07}

				\drawgraph
				
				\draw (1,0) .. controls (1,-1) .. (0,-1) .. controls (-1,-1) .. (-1,0) .. controls (-1,1) .. (1,1) .. controls (3,1) .. (3,0);
				
				\draw (6,0) .. controls (6,-1.5) .. (2,-1.5) .. controls (-1.5,-1.5) .. (-1.5,0) .. controls (-1.5,1.5) .. (1,1.5) .. controls (4,1.5) .. (4,0);
				
				\draw (4,0) .. controls (4,1.5) .. (6.5,1.5) .. controls (10,1.5) .. (10,0.25) .. controls (10,-1) .. (8.75,-1) .. controls (8,-1) .. (8,0);
				
				\node[left] at (-0.1,0) {\rotatebox{270} {\large $v_1$}};
				\node[left] at (0.9,-0.2) {\rotatebox{270} {\large $v_2$}};
				\node[right] at (2,0.2) {\rotatebox{270} {\large $v_3$}};
				\node[right] at (3,0) {\rotatebox{270} {\large $v_4$}};
				\node[right] at (4.1,0) {\rotatebox{270} {\large $v_5$}};
				\node[right] at (5.1,0) {\rotatebox{270} {\large $v_6$}};
				\node[right] at (6.1,0) {\rotatebox{270} {\large $v_7$}};
				\node[right] at (7.1,0) {\rotatebox{270} {\large $v_8$}};
				\node[right] at (8.1,0) {\rotatebox{270} {\large $v_9$}};
				\node[right] at (9.1,0) {\rotatebox{270} {\large $v_{10}$}};

                \node[rectangle,draw,olive,fill=olive] at (10,0) {};
                \node[rectangle,draw,olive,fill=olive] at (-1,0) {};
                \node[rectangle,draw,olive,fill=olive] at (-1.5,0) {};
                
                \filldraw[magenta] (2,1) circle (1.2pt); \filldraw[magenta] (1.5,1) circle (1.2pt); \filldraw[magenta] (1,1) circle (1.2pt);

                \filldraw[magenta] (2,1.5) circle (1.2pt); \filldraw[magenta] (1.5,1.5) circle (1.2pt); \filldraw[magenta] (1,1.5) circle (1.2pt);

                \filldraw[magenta] (7,1.5) circle (1.2pt); \filldraw[magenta] (6.5,1.5) circle (1.2pt); \filldraw[magenta] (6,1.5) circle (1.2pt);

                \filldraw[magenta] (7,1.18) circle (1.2pt); \filldraw[magenta] (6.5,1.21) circle (1.2pt); \filldraw[magenta] (6,1.18) circle (1.2pt);

                \filldraw[magenta] (2.2,-1.5) circle (1.2pt); \filldraw[magenta] (1.7,-1.5) circle (1.2pt); \filldraw[magenta] (1,-1.5) circle (1.2pt);

                \filldraw[magenta] (0.5,-1) circle (1.2pt); \filldraw[magenta] (0,-1) circle (1.2pt); \filldraw[magenta] (-0.5,-1) circle (1.2pt);

                \filldraw[magenta] (9.5,-0.975) circle (1.2pt); \filldraw[magenta] (9,-1) circle (1.2pt); \filldraw[magenta] (8.5,-1) circle (1.2pt);

                \filldraw[magenta] (8.3,-0.34) circle (1.2pt); \filldraw[magenta] (8.4,-0.37) circle (1.2pt); \filldraw[magenta] (8.55,-0.375) circle (1.2pt);

                \filldraw[magenta] (3,-0.315) circle (1.2pt); \filldraw[magenta] (2.8,-0.3) circle (1.2pt); \filldraw[magenta] (3.35,-0.275) circle (1.2pt);

                \filldraw[magenta] (3,-0.827) circle (1.2pt); \filldraw[magenta] (2.8,-0.825) circle (1.2pt); \filldraw[magenta] (3.35,-0.825) circle (1.2pt);

                \filldraw[magenta] (3,-1.17) circle (1.2pt); \filldraw[magenta] (2.8,-1.14) circle (1.2pt); \filldraw[magenta] (3.35,-1.21) circle (1.2pt);

                \filldraw[magenta] (3,-0.5) circle (1.2pt); \filldraw[magenta] (2.8,-0.525) circle (1.2pt); \filldraw[magenta] (3.4,-0.39) circle (1.2pt);
                
                \filldraw[magenta] (8,0.32) circle (1.2pt); \filldraw[magenta] (7.7,0.28) circle (1.2pt); \filldraw[magenta] (8.25,0.3) circle (1.2pt);

                \filldraw[magenta] (6,0.32) circle (1.2pt); \filldraw[magenta] (5.7,0.28) circle (1.2pt); \filldraw[magenta] (6.25,0.3) circle (1.2pt);

                \filldraw[magenta] (6,0.555) circle (1.2pt); \filldraw[magenta] (5.7,0.6) circle (1.2pt); \filldraw[magenta] (6.25,0.5) circle (1.2pt);
                
                \filldraw[magenta] (7.3,-0.34) circle (1.2pt); \filldraw[magenta] (7.4,-0.37) circle (1.2pt); \filldraw[magenta] (7.55,-0.375) circle (1.2pt);

                \filldraw[magenta] (6.8,-0.74) circle (1.2pt); \filldraw[magenta] (6.9,-0.77) circle (1.2pt); \filldraw[magenta] (7.05,-0.775) circle (1.2pt);
                
                \filldraw[magenta] (6.3,-0.34) circle (1.2pt); \filldraw[magenta] (6.4,-0.37) circle (1.2pt); \filldraw[magenta] (6.55,-0.375) circle (1.2pt);

                \filldraw[magenta] (0.7,0.34) circle (1.2pt); \filldraw[magenta] (0.4,0.37) circle (1.2pt); \filldraw[magenta] (0.55,0.375) circle (1.2pt);

                \filldraw[magenta] (1.2,0.74) circle (1.2pt); \filldraw[magenta] (0.9,0.77) circle (1.2pt); \filldraw[magenta] (1.05,0.775) circle (1.2pt);

                \filldraw[magenta] (1.7,0.34) circle (1.2pt); \filldraw[magenta] (1.4,0.37) circle (1.2pt); \filldraw[magenta] (1.55,0.375) circle (1.2pt);

                \filldraw[magenta] (3.7,0.34) circle (1.2pt); \filldraw[magenta] (3.4,0.37) circle (1.2pt); \filldraw[magenta] (3.55,0.375) circle (1.2pt);
                
                \node[right] at (4.2,-3.1) {\rotatebox{270} {\Huge $a$}};
                
			\end{tikzpicture}}}
    \caption{The graph $\apexgraphthree{H}$, constructed from the the graph $G$ shown in Figure~\ref{fig:representation}(a) using the intermediate graph $H$. Note that $H$ is obtained from $G$ by introducing three new vertices (indicated by square nodes) that perform a first subdivision of the three edges that cross the y-axis, namely $v_2v_4, v_5v_7$ and $v_5v_9$. The graph $\apexgraph{G}$ is then obtained from $\apexgraphthree{H}$ by introducing an even number of additional subdivision vertices for all edges $v_iv_j$ of $G$ that are insufficiently subdivided.}
    \label{fig:proof-illustration}
\end{figure}

\subsubsection{Construction of $\apexgraphthree{H}$ and its $\StabGIG$ representation}

Now we commence with the formal proof of Proposition~\ref{prop:forward}. Let us start with~\autoref{enum:step1auergleisner}. Since $G$ is a yes-instance of {\sc PHPC}, according to Auer \& Glei{\ss}ner~\cite{auer2011}, the edge set $E(G)$ can be partitioned into three sets $\Eleftedge(G)$, $\Erightedge(G)$, and $\Ecrossedge(G)$, such that $G$ admits a plane drawing satisfying the following properties.
\begin{itemize}
    \item For every $i\in[n]$, the vertex $v_i$ has the Cartesian coordinate $(0,i-1)$.
    \item Every edge $e\in\Eleftedge(G)$ lies in the half-plane $x\leq 0$.
    \item Every edge $e\in\Erightedge(G)$ lies in the half-plane $x\geq 0$.
    \item Every edge $e\in\Ecrossedge(G)$ crosses the line $x=0$ exactly once, at $(0,Y_e)$, where $Y_e<0$ or $Y_e>n-1$.
\end{itemize}
\autoref{fig:representation}(a) presents an example of a graph, and of a plane drawing of it which satisfies the above properties.  Note that the above plane drawing crucially uses the fact that the line $x=0$ partitions the plane into two half-planes ($x\leq 0$ and $x\geq 0$). Now, let us move on to~\autoref{enum:step2creationofH}.

The construction of $H$ from $G$ is done as follows. For each edge $e=v_iv_j\in\Ecrossedge(G)$, we add a new vertex $\newvert_e$ to the graph, represented by the point $(0,Y_e)$. Consequently, the edge $e=v_iv_j$ is replaced by two edges $v_i\newvert_e$ and $\newvert_e,v_j$. One of those edges is in $\Eleftedge(H)$ and the other is in $\Erightedge(H)$. Therefore, the vertex $\newvert_e$ is a subdivision vertex of the edge $e$. More formally,
\begin{align*}
    V(H)&=V(G)\cup\{\newvert_e\mid e\in\Ecrossedge(G)\};\\
    E(H)&=\Eleftedge(G)\cup\Erightedge(G)\cup\{x\newvert_e,\newvert_e y\mid e=xy\in\Ecrossedge(G)\}.
\end{align*}
What did this achieve for us? The graph $H$ has some nice properties, which we mentioned in~\autoref{enum:step3propertiesofH}, and will elaborate now.
\begin{itemize}
    \item $H$ is a subdivision of $G$. Edges of $\Eleftedge(G)$ and $\Erightedge(G)$ are retained as-is in $H$, and edges of $\Ecrossedge(G)$ are subdivided once.
    \item $H$ is a planar graph. This is because $G$ is a planar graph, and subdivision does not affect planarity.
    \item No edge of $H$ crosses the y-axis ($\Ecrossedge(H)=\emptyset$). This is a straightforward feature of our construction.
    \item $n\leq n_H \leq 4n$, where $n_H=|V(H)|$. This is because $G$ has at most $3n$ edges (as $G$ is planar), and each edge of $G$ is subdivided at most once in $H$. Since $H$ also retains the vertices of $G$, $n_H \leq n+3n$.
\end{itemize}
From $H$, we construct $\apexgraphthree{H}$, as described in Section~\ref{subsec:reduction2}. Then we progressively build its $\StabGIG$ representation (\autoref{enum:step4representationofH}). We first define the positions of the segments corresponding to the apex vertex of $\apexgraphthree{H}$ and the original vertices of $H$ in our $\StabGIG$ representation of $\apexgraphthree{H}$. Note that the apex vertex in $\apexgraphthree{H}$ is called $a$, just as the apex vertex in $\apexgraph{G}$, since these vertices finally coincide in our construction. 

The vertices with the minimum and maximum y-coordinates in $H$ are at $(0,Y_{\min})$ and $(0,Y_{\max})$, where
\begin{align*}
Y_{\min}&=\operatorname{min}\left(\{0\}\cup\{Y_e\mid e\in\Ecrossedge(G)\}\right);\\
Y_{\max}&=\operatorname{max}\left(\{n-1\}\cup\{Y_e\mid e\in\Ecrossedge(G)\}\right).
\end{align*}
We may assume that $Y_{\min}=0$ and $Y_{\max}=n_H-1$, and that the other vertices of $H$ can be shifted accordingly so that their coordinates are $(0,0), (0,1), (0,2), \ldots, (0,n_H-1)$.\footnote{This can be done, for example, by shifting the x-axis downwards and making the gap between consecutive points as $1$.} Furthermore, the vertices are suitably renamed as $w_1, w_2, w_3, \ldots, w_{n_H}$ so that each $w_i$ has the coordinate $(0,i-1)$. 
For each vertex $x\in V(\apexgraphthree{H})$, let $\curve{x}$ be its corresponding segment. See Figure~\ref{fig:FigEdgeRep}.
\begin{align*}
    &\text{The apex vertex $a$ of $\apexgraphthree{H}$:}&&\curve{a} = [(0,0),(0,n_H-1)]\\
    &\text{The original vertices of $H$:}&&\curve{w_i} = [(-i-0.1,i-1),(2n_H-i+0.1,i-1)]\qquad\forall\,i\in[n_H].
\end{align*}

 \begin{figure}[t!]
 \scalebox{0.42}{\input{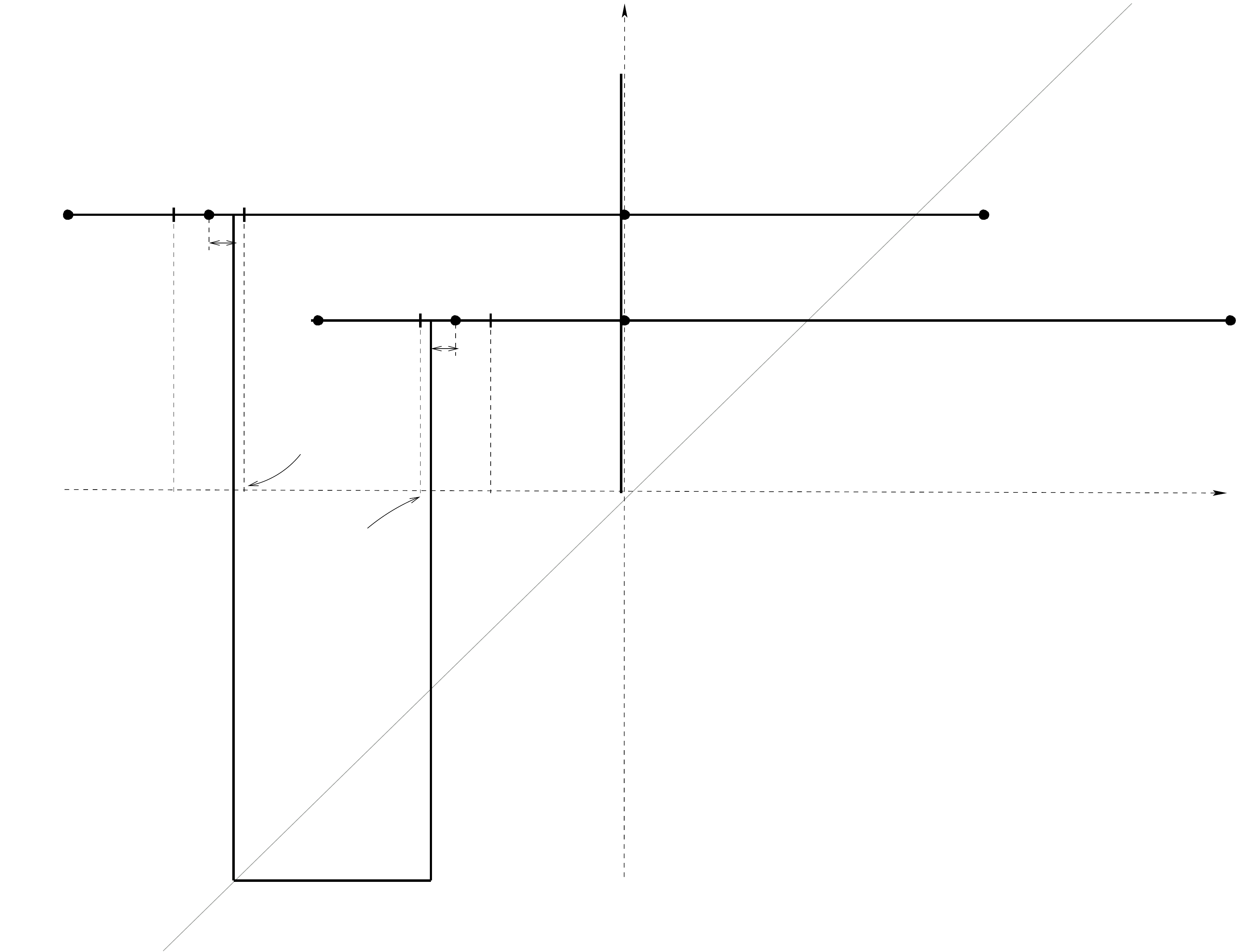_t}}
 \caption{$\StabGIG$ representation of the path $(w_i, u_e^1, u_e^2, u_e^3,w_j)$ of $\apexgraphthree{H}$ corresponding to the edge $e=w_iw_j$ of $H$, with $i<j$. 
 }
 \label{fig:FigEdgeRep}
 \end{figure}

We fix the line $y=x$ as our stab line $\ell$ (recall that $\ell$ does not need to be axis-parallel). It is easy to see that all the segments described so far intersect the stab line. 
This brings us to~\autoref{enum:step5nontrivial}, where we need to place the segments corresponding to the three subdivision vertices of $e$ for each edge $e$ of $H$.

Let us consider edges of $\Eleftedge(H)$. Later, we will deal with edges of $\Erightedge(H)$. We construct a set of intervals $\Ileft$ from $\Eleftedge(H)$ as follows. For each edge $e=w_iw_j\in\Eleftedge(H)$ (where $i<j$), the set $\Ileft$ contains an open interval $(i-1,j-1)$. Thus, the number of intervals in $\Ileft$ is the number of edges in $\Eleftedge(H)$. Now, note that $\Ileft$ has the following interesting property.

\bigskip

\begin{observation} \label{obs:bookthickness}
    If two intervals of $\Ileft$ have an overlap (\ie, the intervals are not disjoint), then one of them must be contained in the other.
\end{observation}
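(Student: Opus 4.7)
The plan is to argue by contradiction using the Jordan Curve Theorem together with the planarity of the plane drawing of $H$ inherited from that of $G$. Suppose two intervals of $\Ileft$, coming from edges $e=w_iw_j$ and $e'=w_{i'}w_{j'}$ of $\Eleftedge(H)$, overlap with neither contained in the other. After swapping $e$ and $e'$ if necessary, I may assume $i<i'<j<j'$. Both edges lie in the closed half-plane $x\le 0$, and by a small perturbation within that half-plane (which preserves planarity and does not change $\Ileft$) I may further assume each edge meets the $y$-axis only at its two endpoints.

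Next, I would form the closed Jordan curve $C$ obtained by concatenating the curve representing $e$ with the segment $S$ of the $y$-axis joining $w_i$ and $w_j$. Its bounded (interior) component lies entirely in the open half-plane $x<0$. Because $i<i'<j$, the vertex $w_{i'}$ lies in the relative interior of $S\subset C$; and because $j'>j$, the vertex $w_{j'}$ lies on the $y$-axis strictly above $w_j$ and hence in the unbounded (exterior) component of $C$.

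Now I would trace the edge $e'$ starting from $w_{i'}$: just after leaving $w_{i'}$ it enters $x<0$, which near $w_{i'}$ coincides with the interior of $C$. Thus $e'$ begins in the interior of $C$ but must reach the exterior point $w_{j'}$, so it crosses $C$ somewhere. Since $e'$ meets the $y$-axis only at its own endpoints, it cannot cross the interior of $S$; consequently it must cross the curve $e$ itself, producing a forbidden edge crossing in the plane drawing of $H$. This contradicts the planarity of $H$ (equivalently, of $G$), establishing the observation.

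The only genuine subtlety is that edges of $\Eleftedge(H)$ are a priori allowed to graze the $y$-axis at interior points, not merely at their endpoints, so that the Jordan curve $C$ might fail to be simple or the tracing argument might be ambiguous at tangencies. I would dispose of this either by the perturbation noted above, or equivalently by invoking the standard fact that edges drawn on the same page of a 2-page book embedding must nest—precisely the desired conclusion—which is itself a packaging of this Jordan-curve argument. Everything else is routine.
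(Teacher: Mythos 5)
Your proof is correct, and it is in substance the same argument the paper relies on: the paper does not prove Observation~\ref{obs:bookthickness} at all but simply declares it well known as the nesting principle of one page of a book embedding, citing Bernhart--Kainen, and your Jordan-curve argument (closing $e$ with the segment of the $y$-axis between $w_i$ and $w_j$, locating $w_{i'}$ on that segment and $w_{j'}$ in the exterior, and forcing $e'$ to cross $e$) is precisely the standard proof packaged by that citation, as you yourself note. Your handling of the degenerate case where edges of $\Eleftedge(H)$ graze the $y$-axis at interior points, via a small perturbation into $x<0$, is an appropriate way to make the Jordan curve simple and costs nothing.
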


\bigskip

Observation~\ref{obs:bookthickness} is well known and easy to see; it is the basic underlying principle of \emph{book thickness} or \emph{book embeddings} of graphs (for a proof, see~\cite{bernhartkainen1979}). 
We will use this observation crucially in our construction of the three subdivision segments for each edge $e$ of $\Eleftedge(H)$.

Given an interval $I\in\Ileft$, let its \emph{depth}, denoted by $\level{I}$, be the number of intervals $I'\in\Eleftedge(H)$ such that $I\subseteq I'$. In other words, $\level{I}$ is the number of intervals that contain $I$ (including $I$ itself). Thus, $1\leq\level{I}\leq n_H-1$. Now we are ready to describe our construction.

Fix $\varepsilon=0.01/n_H$. Let $e=w_iw_j\in\Eleftedge(H)$ such that $i<j$. Let $I=(i-1,j-1)$ be the interval corresponding to $e$ in $\Ileft$. After the $3$-subdivision, the edge $e$ becomes the path $(w_i,u_e^1,u_e^2,u_e^2,w_j)$ in $\apexgraphthree{H}$. We already know the coordinates of the segments for $w_i$ and $w_j$. As for the other three segments, we define $\lefty_e$ and $\righty_e$ as follows.
\begin{align*}
    \lefty_e &= -j+\varepsilon\level{I};\\
    \righty_e &= -i-\varepsilon\level{I}.
\end{align*}
Now, we are set to define the coordinates of the other three subdivision segments of $e$, which draw a $\U$-shape. See Figure~\ref{fig:FigEdgeRep}.
\begin{align*}
    \curve{u_e^3} &= \left[(\lefty_e,j-1),(\lefty_e,\lefty_e)\right];\\
    \curve{u_e^2} &= \left[(\lefty_e,\lefty_e),(\righty_e,\lefty_e)\right];\\
    \curve{u_e^1} &= \left[(\righty_e,\lefty_e),(\righty_e,i-1)\right].
\end{align*}
Before proving that the 3 segments do not intersect any segments apart from $w_i$, $w_j$, and each other, let us prove that all 3 segments intersect the stab line. Consider the point of intersection of $\curve{u_e^3}$ and $\curve{u_e^2}$, namely $(\lefty_e,\lefty_e)$. This point has the same x- and y-coordinate, and so it lies on the stab line $y=x$. Thus, all that remains to be shown is that $\curve{u_e^1}$ also intersects the stab line. Since $i\leq j-1$ and $\varepsilon\level{I}<0.01$, $$\lefty_e \leq \righty_e \leq i-1.$$ Hence, the point $(\righty_e,\righty_e)$ lies on the segment $\curve{u_e^1} = \left((\righty_e,\lefty_e),(\righty_e,i-1)\right)$, and on the stab line $y=x$ also.

Now, we will show that for two edges $e=w_iw_j$ (where $i<j$) and $e'=w_{i'}w_{j'}$ (where $i'<j'$) of $\Eleftedge(H)$, the segments corresponding to the three subdivision vertices of $e$ do not intersect the segments corresponding to the three subdivision vertices of $e'$. Let their corresponding intervals in $\Ileft$ be $I$ and $I'$, respectively. Due to Observation~\ref{obs:bookthickness}, we have only two cases.
\begin{enumerate}
    \item \underline{Case 1: $I$ and $I'$ are disjoint.} Suppose that $i<j\leq i'<j'$ (the other sub-case, $i'<j'\leq i<j$, is similar). We have the following.
    \begin{align*}
        i < j&\leq i' < j'\\
        -j' < -i'&\leq -j < -i\\
        -j'+\varepsilon\level{I'}<-i'-\varepsilon\level{I'}&< -j+\varepsilon\level{I}<-i-\varepsilon\level{I} &&\text{(since $2\varepsilon\level{I'}<1\leq j'-i'$ and $2\varepsilon\level{I}<1\leq j-i$)}\\
        \lefty_{e'} < \righty_{e'} &< \lefty_{e} < \righty_{e}
    \end{align*}
    
    Thus, the x-coordinate of each of the 3 subdivision segments of $e'$ is at most $\righty_{e'}$, and the x-coordinate of each of the 3 subdivision segments of $e$ is at least $\lefty_e$. This completes the proof of Case 1.
    
    \item \underline{Case 2: One of $I$ or $I'$ contains the other.} Suppose that $I\subseteq I'$ (the other sub-case, $I'\subseteq I$, is similar). Then, note that $\level{I'}<\level{I}$. We have the following.
    \begin{align*}
        i' \leq i &< j \leq j' &&\text{(since $I\subseteq I'$)}\\
        -j' \leq -j &< -i \leq -i'\\
        -j' + \varepsilon\level{I'} < -j + \varepsilon\level{I} &< -i - \varepsilon\level{I} < -i' - \varepsilon\level{I'} &&\text{(since $\level{I'}<\level{I}$ and $2\varepsilon\level{I}<1\leq j-i$)}\\
        \lefty_{e'}<\lefty_{e}&<\righty_{e}<\righty_{e'}
    \end{align*}
    Thus, the x-coordinate of $\curve{u_{e'}^3}$ (namely $\lefty_{e'}$) is less than the x-coordinate of each of the 3 subdivision segments of $e$, and the x-coordinate of $\curve{u_{e'}^1}$ (namely $\righty_{e'}$) is greater than the x-coordinate of each of the 3 subdivision segments of $e$. Finally, the y-coordinate of $\curve{u_{e'}^2}$ (namely $\lefty_{e'}$) is less than the y-coordinate of each of the 3 subdivision segments of $e$. This completes the proof of Case 2.
\end{enumerate}

This completes the construction of the $\StabGIG$ representation for the subgraph of $\apexgraphthree{H}$ induced by the original vertices of $H$, the apex vertex $a$ and the subdivision vertices located in the left half-plane. For the vertices and edges located in the right half-plane, the construction is similar but replaces the $\U$-shapes by $\Urev$-shapes. The correctness of our construction follows from Cases 1 and 2 above.

\subsubsection{$\StabGIG$ representation of $\apexgraph{G}$}\label{subsec:Gapexrepres}

 In order to conclude our proof of Proposition~\ref{prop:forward}, we need -- according to~\autoref{enum:step6sevensubdiv} -- to convert the $\StabGIG$ representation of $\apexgraphthree{H}$ into a $\StabGIG$ representation of $\apexgraph{G}$. 
This operation reduces to further subdividing one edge of $\apexgraphthree{H}$, that may be appropriately chosen, by transforming it into a path of odd length $2h+1$, for an appropriate integer $h\geq 1$. We always choose an edge represented by a horizontal segment, and subdivide it (that is, we replace it with a sequence of segments), as illustrated in Figure~\ref{fig:LastSubdivision}.

Consider the $\StabGIG$ representation of $\apexgraphthree{H}$. For each edge $e\in \Ecrossedge(G)$, a 7-subdivision is performed in  $\apexgraphthree{H}$. But for each edge $e\in\Eleftedge(G)\cup\Erightedge(G)$, only a 3-subdivision is performed in $\apexgraphthree{H}$. When $k=7$, the latter edges are insufficiently subdivided in $\apexgraphthree{H}$ with respect to $\apexgraph{G}$. Moreover, when $k>7$ all the original edges in $G$ are insufficiently subdivided. In all cases, let $e\in E(G)$ be one of the edges that are insufficiently subdivided, and assume that $e\in\Eleftedge(H)$. (Otherwise, when $e\in\Erightedge(H)$ the approach is similar, whereas when $e\in\Ecrossedge(G)$ we have to perform the same approach on the 'half-edge' of $e$ that belongs to $\Eleftedge(H)$.) In  $\apexgraphthree{H}$, $e$ is 3-subdivided. Assume we need a $k$-subdivision of $e$, with odd $k\geq 7$, and let $h=(k-3)/2$. Then we perform a $2h$-subdivision of the edge $u_e^1u_e^2$ of $\apexgraphthree{H}$,
and describe below the new segments representing $u_e^1, u_e^2$ and the $2h$ supplementary vertices. 

In $\apexgraphthree{H}$, the intersection point between the segments  $\curve{u_e^3}$ and $\curve{u_e^2}$ is $(\lefty_e,\lefty_e)$. The closest possible similar point located on the stab line towards right is the point $(\lefty_e+\varepsilon,\lefty_e+\varepsilon)$. We then replace the segment $\curve{u_e^2}$ with a succession of $2h+1$ alternating horizontal and vertical segments whose intersections define the path required by the $2h$-subdivision of 
$(u_e^1,u_e^2)$. See Figure~\ref{fig:LastSubdivision}.

 \begin{figure}[t!]
 \centering
 \includegraphics[scale=0.5]{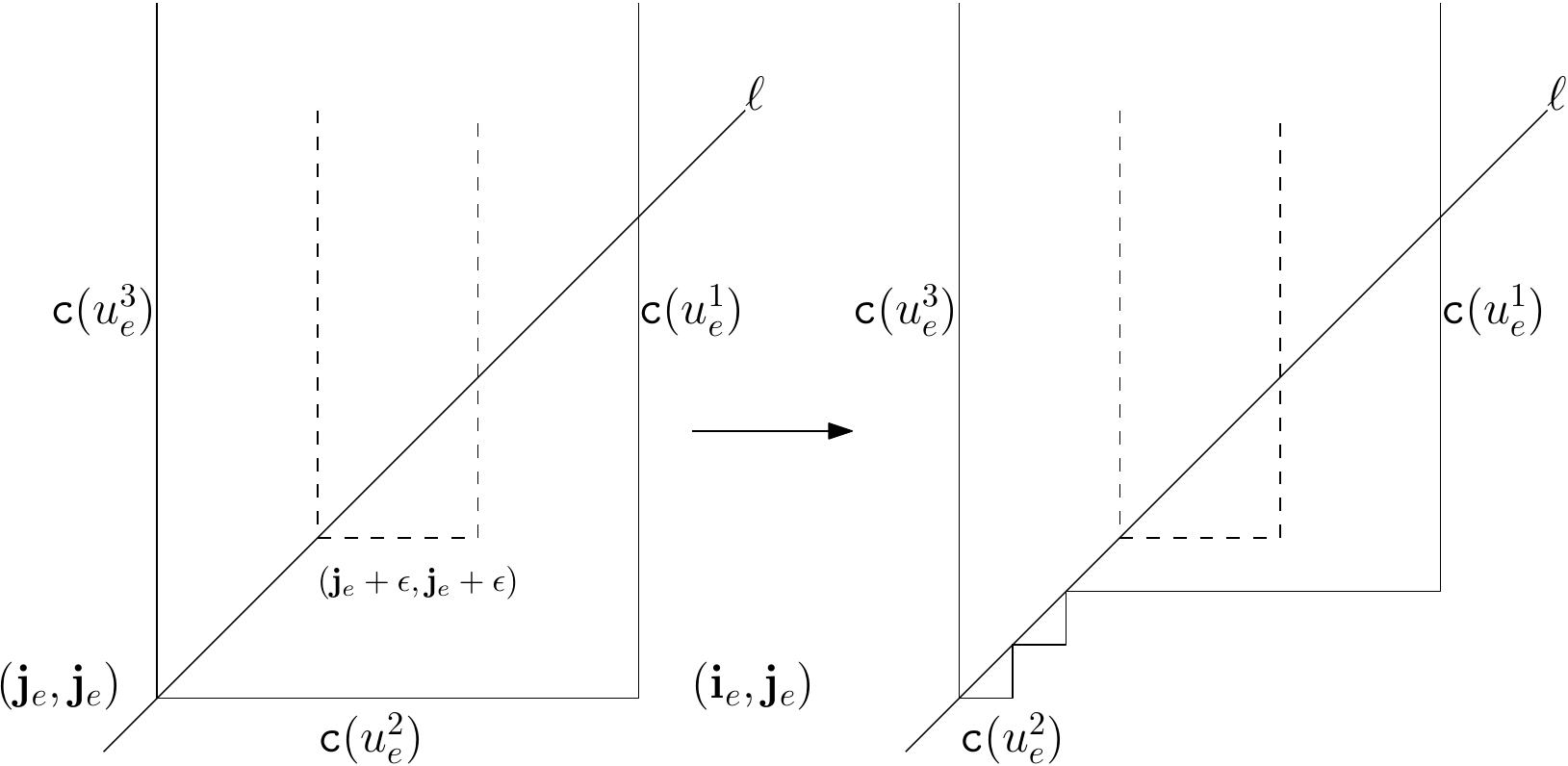}
 \caption{Replacement of $\curve{u_e^2}$ (left), for some edge $e$ which is not sufficiently subdivided in $\apexgraphthree{H}$, with a sequence of segments.
 The dotted $\U$-shape indicates the closest possible location of an $\U$-shape located above the one we defined for $e$. 
 }
 \label{fig:LastSubdivision}
 \end{figure}

Formally, let $\varepsilon_2=\varepsilon/(h+1)$, and define the following segments. The horizontal segment obtained for $t=0$ is the new segment $\curve{u_e^2}$. 


For $0\leq t\leq h-1$:

$$\left[(\lefty_e+t\varepsilon_2,\lefty_e+t\varepsilon_2), (\lefty_e+(t+1)\varepsilon_2,\lefty_e+t\varepsilon_2)\right]$$
$$\left[(\lefty_e+(t+1)\varepsilon_2,\lefty_e+t\varepsilon_2), (\lefty_e+(t+1)\varepsilon_2,\lefty_e+(t+1)\varepsilon_2)\right]$$








\newpage

Complete this set of segments with:

$$\left[(\lefty_e+h\varepsilon_2,\lefty_e+h\varepsilon_2), (\righty_e,\lefty_e+h\varepsilon_2)\right]$$
$$\curve{u_e^1}=\left[(\righty_e,\lefty_e+h\varepsilon_2), (\righty_e,i-1)\right]$$

All the segments corresponding to the $2h$ new subdivision vertices lie strictly below the closest upper segment not involved in the subdivision of $e$ (whose $y$-coordinate is at least $\lefty_e+\varepsilon$), and (not strictly) above the former segment $\curve{u_e^2}$. They also lie  between the $x$-coordinates of $\curve{u_e^3}$ and $\curve{u_e^1}$. Thus they cannot intersect other segments. So we have a representation of the path with $(2h+1)$ edges required by the subdivision.

This ends the proof of Proposition \ref{prop:forward}.

\subsubsection{Proof of Theorem \ref{thm:stabbable-GIG}}\label{subsec:proofthm}
We argue that recognizing $\StabGIG$ is in $\NP$. Note that a graph is a $\StabGIG$ if and only if it has a stabbable grid intersection representation where the stab line has slope $1$, and the intersection points of the segments with the stab line as well as with the other segments have integer coordinates. Such a representation is called \emph{good}. The backward direction is immediate.  The forward direction is proved as follows. Let $G$ be a {\sc StabGIG}, and consider an arbitrary stabbed grid intersection representation of it. First, appropriately rotate and then scale the representation in the horizontal direction, in order to obtain a grounded line with slope $1$. Second, order the segments according to the bottom to top order of their intersection points with the stab line. Given a segment $s$, let $r(s)$ be its rank in this order. Third, build a new stabbed representation, as follows. For each horizontal segment $s$ of the initial representation, its  new $y$-coordinate
is $r(s)$, and the new $x$-coordinates of its left and right endpoints are $\min\{r(s')\,|\, s'\hbox{ intersects }s\}\cup\{r(s)\}$  and  
$\max\{r(s')\,|\, s'\hbox{ intersects }s\}\cup\{r(s)\}$ respectively.
For each vertical segment $s$ of the initial representation, its new $x$-coordinate is $r(s)$ and the new $y$-coordinates of its bottom and top endpoints are $\min\{r(s')\,|\, s'\hbox{ intersects }s\}\cup \{r(s)\}$ and 
$\max\{r(s')\,|\, s'\hbox{ intersects }s\}\cup \{r(s)\}$ respectively. Then, the segment $s$ intersects the stab line at $(r(s),r(s))$ and any other segment $s'$ it must intersect at $(r(s),r(s'))$ when $s$ is vertical,  respectively at $(r(s'),r(s))$ when $s$ is horizontal. 

Therefore, it is sufficient to prove that the problem of deciding whether an input graph has a good representation is in $\NP$. Due to the integer coordinates, the task of checking whether a given set of segments, one for each vertex, is a good representation of the input graph is polynomial. 
Furthermore, $\apexgraph{G}$  is obtained in polynomial time from $G$. It remains to show that $G$ is a yes-instance of {\sc PHPC} if and only if $\apexgraph{G}$ is a yes-instance of {\sc StabGIGRec}. Proposition \ref{prop:forward} shows the forward direction. 
For the backward direction, we remark -- seeking to use Proposition~\ref{prop:yes-instance} -- that if $\apexgraph{G}$ is a $\StabGIG$, then $\apexgraph{G}$ is a 1-string graph. To this end, it is sufficient to note that each segment in a grid representation of $\apexgraph{G}$ is a 1-string, and that two segments that intersect without crossing (because an endpoint of one segment lies on the second segment) may be easily transformed into two crossing segments. Now, using Proposition~\ref{prop:yes-instance}, we deduce that $G$ is a yes-instance of {\sc PHPC}. This completes the proof of Theorem~\ref{thm:stabbable-GIG}.
\medskip

Observe that the graph $\apexgraph{G}$ used in our proof is 'almost' planar, in the sense that $\apexgraph{G}\setminus\{a\}$ is planar. A graph that can be made planar by removing one of its vertices is called an \emph{apex} graph \cite{welsh1993,thilikos1997}. Moreover, $\apexgraph{G}$ has girth $g=k+3\geq 10$ so that we may deduce the following result.

\begin{corollary}
For each integer $g\geq 10$, {\sc StabGIGRec} remains $\NP$-complete even if the inputs are restricted to apex graphs of girth exactly $g$. 
\label{cor:generalization}
\end{corollary}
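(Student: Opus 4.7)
The plan is to reuse the reduction from the proof of Theorem~\ref{thm:stabbable-GIG} essentially verbatim, and to establish two structural properties of the graph $\apexgraph{G}$ produced by that reduction: that it is an apex graph and that its girth equals a specified value. First, I would observe that $\apexgraph{G}\setminus\{a\}=\subdivide{G}$ is a subdivision of the planar input graph $G$; since subdivisions preserve planarity, $\apexgraph{G}\setminus\{a\}$ is planar, so $\apexgraph{G}$ is apex by definition.

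Next, I would compute the girth of $\apexgraph{G}$. Any cycle $C$ of $\apexgraph{G}$ either avoids $a$ or contains it. If $C$ avoids $a$, then $C$ lies in $\subdivide{G}$ and corresponds to a cycle of some length $c\geq 3$ in $G$, giving $|C|=c(k+1)\geq 3k+3$. If $C$ contains $a$, then $C$ is the union of two apex edges $xa,\,ay$ for some distinct $x,y\in V(G)$ together with a path in $\subdivide{G}$ from $x$ to $y$, whose length is at least $k+1$, with equality when $xy\in E(G)$; hence $|C|\geq k+3$. Because $G$ is connected on at least five vertices, it has an edge, so the bound $k+3$ is attained. For $k\geq 7$ one has $k+3<3k+3$, so the girth of $\apexgraph{G}$ equals $k+3$ exactly.

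The third step is to match $g$ with the reduction. Given $g\geq 10$, I would set $k=g-3$ and apply the reduction of Theorem~\ref{thm:stabbable-GIG}; by the girth computation this produces, from a planar instance of {\sc PHPC}, an apex graph of girth exactly $g$. Membership in $\NP$ is inherited from Theorem~\ref{thm:stabbable-GIG}, whose membership argument applies to arbitrary graphs, and $\NP$-hardness restricted to apex graphs of girth $g$ follows from the combined reduction, since {\sc PHPC} is $\NP$-hard.

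The main obstacle is the parity constraint on $k$: the reduction in Theorem~\ref{thm:stabbable-GIG} requires $k$ to be odd (this is inherited from Proposition~\ref{prop:yes-instance}), so the direct choice $k=g-3$ succeeds only for even $g$. To cover odd values of $g\geq 11$ one would need an adjustment, for instance attaching a small rigid gadget or subdividing each apex edge an even number of times so as to increase the girth by a constant without affecting the equivalence with {\sc PHPC}; verifying that such a modification preserves both the forward direction from Proposition~\ref{prop:forward} and the backward direction via Proposition~\ref{prop:yes-instance} is the delicate point that the plan must address.
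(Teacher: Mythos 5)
Your core argument is the same as the paper's: $\apexgraph{G}\setminus\{a\}=\subdivide{G}$ is a subdivision of the planar graph $G$ and hence planar, so $\apexgraph{G}$ is an apex graph; the girth of $\apexgraph{G}$ is exactly $k+3$ (a shortest cycle passes through the apex and one subdivided edge, while cycles avoiding the apex have length at least $3(k+1)$); and one tunes $k$ to hit the target girth. The paper devotes only one sentence to all of this, so your explicit girth computation is a faithful expansion of what is intended.

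The loose end you flag at the end, however, is not something a gadget can repair: it is a defect of the statement itself, not of your proof. Since $k$ must be odd (this is forced by Proposition~\ref{prop:yes-instance}), the construction only realizes even girths $g\geq 10$ --- and this is unavoidable, because every grid intersection graph, hence every $\StabGIG$ and every yes-instance of {\sc StabGIGRec}, is bipartite, and bipartite graphs have even girth. For odd $g$ the class of inputs of girth exactly $g$ contains no yes-instances at all, so the restricted recognition problem is trivially polynomial and the corollary is false as written; any modification that ``increases the girth by a constant'' to reach an odd value while preserving the equivalence with {\sc PHPC} must therefore fail. The correct conclusion --- which is exactly what both you and the paper actually establish --- is that {\sc StabGIGRec} remains $\NP$-complete on apex graphs of girth exactly $g$ for every \emph{even} $g\geq 10$; note that the paper's own justification (``$\apexgraph{G}$ has girth $g=k+3\geq 10$'') likewise only covers the even case, so the statement of Corollary~\ref{cor:generalization} should be amended rather than your argument strengthened.
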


\section{Conclusion}\label{sec:conclude}

In this paper, we proved that recognizing grounded \LS-graphs is $\NP$-complete. A natural direction of research would be to find interesting subclasses of grounded \LS-graphs with polynomial time recognition algorithms. One such candidate class could be the \emph{grounded unit \LS-graphs}. An unit \LS-shape is a made by joining the bottom end-point of a vertical  ($\vert$) segment (of arbitrary length) to the left end-point of a horizontal ($-$) segment of \emph{unit} length. Grounded unit \LS-graphs are the intersection graphs of unit \LS-shapes such that all the \LS-shapes' anchors lie on the same horizontal line. Grounded unit \LS-graphs is a subclass of \emph{cocomparability} graphs and contains \emph{unit interval} graphs and \emph{permutation graphs} as its subclasses. The above discussion motivates the following question.

\begin{question}
What is the computational complexity of recognizing grounded unit \LS-graphs?
\end{question}


Another open question concerns grounded square \LS-graphs i.e. intersection graphs of grounded \LS-shapes whose horizontal and vertical segments have the same length~\cite{bose2022computing}. 
%
\begin{question}
What is the computational complexity of recognizing grounded square \LS-graphs?
\end{question}

As noticed in Corollary \ref{cor:generalization}, recognizing $\StabGIG$ remains $\NP$-complete even if the inputs are restricted to apex graphs. Apex graphs do not contain $K_6$ as a minor. In contrast, recognizing $K_4$-minor free stabbable grid intersection graphs is trivial (since $K_4$-minor free graphs are planar and all planar bipartite graphs are stabbable grid intersection graphs). This motivates the following question.

\begin{question}
What is the computational complexity of recognizing $K_5$-minor free stabbable grid intersection graphs?
\end{question}

%

\bibliographystyle{plain}

\end{document}